\newtheorem{thm}{Theorem}
\newtheorem*{thm*}{Theorem}
\newtheorem{definition}{Definition}
\newtheorem*{prop*}{Proposition}
\newtheorem{lem}{Lemma}
\newtheorem*{lem*}{Lemma}
\newtheorem{cor}{Corollary}
\newtheorem*{cor*}{Corollary}
\newtheorem{fact}{Fact}
\definecolor{KKgreen}{RGB}{0,100,0}
\newcommand{\ot}{\otimes}
\def\supp{{\rm supp}}
\def\cA{{\mathcal A}}
\def\cB{{\mathcal B}}
\def\cE{{\mathcal E}}
\def\cF{{\mathcal F}}
\def\cG{{\mathcal G}}
\def\cH{{\mathcal H}}
\def\cI{{\mathcal I}}
\def\cM{{\mathcal M}}
\def\cN{{\mathcal N}}
\def\cK{{\mathcal K}}
\def\cR{{\mathcal R}}
\def\cS{{\mathcal S}}
\def\cT{{\mathcal T}}
\def\cU{{\mathcal U}}
\def\cV{{\mathcal V}}
\def\cHil{{\cH_{B_i^L}}}
\def\cHir{{\cH_{B_i^R}}}
\def\tr{{\rm tr}}
\def\id{{\rm id}}
\def\rank{{\rm rank}}
\def\rag{\rangle}
\def\lag{\langle}
\def\rrag{\rrangle}
\def\llag{\llangle}
\def\half{\frac{1}{2}}
\def\tA{{\tilde A}}
\def\tB{{\tilde B}}
\def\supp{{\rm supp}}
\def\alg{{\rm Alg}}
\def\dim{{\rm dim}}
\def\argmin{{\rm argmin}}
\def\Sch{{\rm SchR}}
\begin{document}

\title{Exact and local compression of quantum bipartite states}

\author{Kohtaro Kato,~\IEEEmembership{Department of Mathematical Informatics, \\
Graduate School of Informatics, \\
Nagoya University, Nagoya 464-0814, Japan}}



\maketitle
\begin{abstract}
We study exact local compression of a quantum bipartite state; that is, applying local quantum operations to reduce the dimensions of the Hilbert spaces while perfectly preserving the correlation. We provide a closed-form expression for the minimal achievable dimensions, formulated as a minimization of the Schmidt rank of a particular pure state constructed from the given state. Numerically tractable upper and lower bounds on this rank are also obtained. As an application, we consider the exact compression of quantum channels. This method enables the analysis of a post-processing step that reduces the output dimensions while retaining the information contained in the original channel’s output.
\end{abstract}

\begin{IEEEkeywords}
Quantum data-compression, quantum sufficient statistics, one-shot quantum information theory. 
\end{IEEEkeywords}

\section{Introduction}
\IEEEPARstart{Q}{uantum} data compression is one of the most fundamental quantum information processing. Its concept is analogous to Shannon's classical data compression and aims to reduce the dimensions of the storage of quantum states while minimizing information loss. 
Various approaches to quantum compression have been proposed, primarily focusing on achieving approximately or asymptotically accurate representations of quantum states~\cite{Schumacher95,Rahul05, Datta13, Khanian19,Khanian2019-bg}. These protocols have yielded valuable insights into the trade-offs between compression efficiency, fidelity, and additional resources.

Nevertheless, there remains room to investigate {\it exact} compression of quantum states, where the compressed representation fully retains the information content of the original state. 
This compression type has gained interest in information theory and the condition is known as the sufficiency of  statistics~\cite{Fisher22,Coverthomas06}. 
Classical statistics are {\it sufficient} for a given statistical model if they are as informative as the original model. 
It is well known that the minimal random variable for describing a statistical model (a family of probability distributions) is given as the minimal sufficient statistics associated with it~\cite{LehmannScheffe50}.
The concept of sufficiency has been extended to quantum systems~\cite{Petz1986-nm,Petz1988-ht}, where the concept of sufficient statistics is replaced by that of sufficient subalgebras.

In this work, we study the local and exact compressions of bipartite quantum states. In other words, we consider quantum operations in each subsystem (local operations) with a smaller output Hilbert space that can be reverted by another quantum operation. This type of data compression is an exact and noiseless one-shot quantum data compression of general mixed state sources without side information or entanglement assistance. Here, it is crucial to consider mixed state source, since pure state source cannot be compressed less than the Schmidt rank. 

An asymptotic scenario of local compressions of mixed state sources has been studied in Refs.~\cite{PhysRevA.57.3364,HowardBarnum_2001,PhysRevLett.87.017902,9624960} for classical-quantum states and extended for fully quantum states in Ref.~\cite{Khanian2019-bg}. The asymptotically optimal rate in Ref.~\cite{Khanian2019-bg} is given by the entropy of the state restricted on the subalgebra defined via the Koashi-Imoto decomposition~\cite{Koashi2002-vv,Hayden2004-os}. Similarly, one can check that the minimal dimension of (one-shot) exact local compression is also given by a subspace defined via the Koashi-Imoto decomposition. However, the explicit calculation of the Koashi-Imoto decomposition is highly complicated~\cite{Koashi2002-vv,Yamasaki2019-pf}, and thus no closed formula for the optimal rate has been obtained so far.  

As a result, we show a closed formula to calculate the minimal dimension of the output Hilbert spaces. The formula is obtained by minimizing the Schmidt rank (i.e., the rank of the reduced matrix) over unitarily related pure states, calculated from the input state. As a corollary, we provide additional tractable lower and upper bounds for the minimal dimensions. Our result is based on the recent development of quantum sufficiency~\cite{Jencova2012-sy} and operator algebra quantum error correction~\cite{Beny2007-bw,Beny2007-ua,Chen2020-in}, which formulates the set of operators preserved by a given quantum channel.

The remainder of this paper is structured as follows. Section~\ref{sec:2} provides relevant background concepts. We provide an informal sketch of our strategy for the proof and summary of the main results in Section~\ref{sec:MR}. Section~\ref{sec:3} provides a new characterization of the minimal sufficient subalgebra in our setting, which plays an important role in validation. Section~\ref{sec:calculate} discusses the validation of the main results. Section~\ref{sec:example} presents applications of the main results. Finally, Section~\ref{sec:conclusion} concludes the paper by summarizing our findings and outlining potential directions for future research.

\section{Exact local compression of bipartite states}\label{sec:2}\subsection{Notations}
Throughout this study, we consider finite-dimensional Hilbert spaces. For a set of matrices $S$, $\alg\{S\}$ denotes the $C^*$-algebra generated by $S$. 
$\cB(\cH)$ denotes the set of bounded operators (the matrix algebra) on Hilbert space $\cH$. Lower indices of density matrix represents the corresponding Hilbert spaces. For instance, $\rho_{AB}$ is a density operator on $\cH_A\ot \cH_B$ and $\rho_A:=\tr_{B}\rho_{AB}$ is the reduced density matrix on $\cH_A$. $I_A$ is the identity operator on system $A$ and $\tau_A=I_A/d_A$ is the completely mixed state on $A$, where $d_A=\dim\cH_A$. We define $|\cI\rrag_{X{\bar X}}:=\sum_{k=1}^{d_X}|\phi^k\rag_{X}|\phi^k\rag_{\bar X}$ in an orthogonal basis $\{|\phi^k\rag\}_k$ of $\cH_{X}\cong\cH_{\bar X}$. Similarly, the normalized maximally entangled state is denoted as $|\Psi\rrag_{X{\bar X}}:=|\cI\rrag_{X{\bar X}}/\sqrt{d_X}$. For $O\in\cB(\cH_A)$, $O^{T_A}$ is the transposition, and ${\bar O}$ is the complex conjugate with a fixed basis of $A$. 
We will omit the tensor products with identities that are clear from the context. 

For a completely positive and trace-preserving (CPTP) map (i.e., a quantum channel) $\cE:\cB(\cH)\to\cB(\cK)$, $\cE^\dagger$ denotes its Hilbert-Schmidt dual 
\begin{equation*}
 \tr\left(O\cE(\rho) \right)=\tr\left(\cE^\dagger(O)\rho\right)\,\quad\forall \rho\in\cB(\cH), \forall O\in\cB(\cK)\,.
\end{equation*}
When $\cK=\cH$, $\cF(\cE^\dagger)$ denotes the set of the fixed-points of $\cE^\dagger$
\begin{equation*}
    \cF(\cE^\dagger):=\left\{O\in\cB(\cH) \:\middle|\: \cE^\dagger(O)=O\right\}\,,
\end{equation*}
which forms a unital $C^*$-algebra~\cite{choi1974schwarz}. 
$\cE^c$ denotes a complementary channel of $\cE$. Let $\{E_a\}$ be the Kraus operators of $\cE$. We then define the {\it correctable algebra} $\cA_\cE$ of $\cE$~\cite{Beny2007-bw,Beny2007-ua} on $\cH$ as follows: 
\begin{align*}
\cA_\cE&:=(\{E_a^\dagger E_b\}_{a,b})'\\
&\:=\left\{ X\in \cB(\cH)\;|\;[E_a^\dagger E_b,X]=0\;\forall a,b \right\}\,.
\end{align*}
This is the unital $C^*$-subalgebra of $\cB(\cH)$ that describes perfectly preserved information under the action of $\cE$ (see Appendix~\ref{app:OAQEC} for more details). 

For a state $\rho>0$, we denote the modular transformation as
\begin{equation*}
\Delta_\rho^t(\cdot):=\rho^{it}\cdot \rho^{-it}\,,\quad t\in\mathbb{R}  
\end{equation*}
and $\Gamma_\rho(\cdot):=\rho^{\half}\cdot\rho^{\half}$. We also denote the adjoint action of the generator of $\Delta_\rho^t$ as 
\begin{equation*}
    {\rm ad}_\rho(\cdot):=[\cdot,\log\rho]\,.
\end{equation*}
For a given CPTP map $\cE$, the Petz recovery map~\cite{Petz1986-nm,Petz1988-ht} with respect to $\rho>0$ is defined as follows: 
\begin{align}
    \cR^{\rho,\cE}(\cdot)&:=\rho^{\half}\cE^\dagger\left(\cE(\rho)^{-\half}\cdot\cE(\rho)^{-\half}\right)\rho^{\half}\,\label{eq:Petzmap}\\
    &=\Gamma_{\rho}\circ\cE^\dagger\circ\Gamma_{\cE(\rho)}^{-1}\nonumber\,.
\end{align}

For a bipartite pure state $\psi_{AB}=|\psi\rag\lag\psi|_{AB}$, $S(A)_\psi=-\tr\psi_A\log\psi_A$ is the entanglement entropy and $\Sch(A:B)_\psi$ is the Schmidt rank of $|\psi\rag_{AB}$.  $\Sch(A:B)_\psi$ is equal to the rank of the reduced state, i.e.,  $\Sch(A:B)_\psi=\rank(\psi_A)=\rank(\psi_B)$.

Lastly, for a given $\rho_{AB}$ with $\rho_B>0$, we define 
\begin{equation}\label{def:Choi1}
    J_{AB}:=\rho_B^{-\half}\rho_{AB}\rho_B^{-\half}
\end{equation}
By definition, $J_B=I_B$ and therefore this positive semi-definite operator can be regarded as the Choi-Jamiołkowski representation of a unital CP-map
\begin{equation}\label{def:Omega1}
    \Omega_{A\to B}^\dagger(X_A):=\tr_A\left(J_{AB}(X_A^{T_A}\otimes I_B)\right)\,.
\end{equation}
The dual is a CPTP-map $\Omega_{B\to A}$. 

\subsection{Exact local compression of bipartite states}
We are interested in the quantum bipartite state $\rho_{AB}$ in the finite-dimensional Hilbert space $\cH_{AB}=\cH_A\otimes \cH_B$. 
We assume without loss of generality that $\rho_A,\rho_B>0$ (by restricting each Hilbert space to the support of the reduced state). 
\begin{definition} We say a CPTP-map $\cE_{B\to\tB}:\cB(\cH_B)\to\cB(\cH_\tB)$ is an exact local compression of $\rho_{AB}$ on $B$, if there exists a CPTP-map $\cR_{\tB\to B}$ satisfying 
\begin{equation}\label{eq:recovery0}
    \cR_{\tB\to B}\circ\cE_{B\to \tB}(\rho_{AB})=\rho_{AB}.
\end{equation}
We say that compression $\cE_{B\to\tB}$ is nontrivial if $d_\tB<d_B$. We define the exact compression $\cE_{A\to\tA}$ on $A$ similarly.
\end{definition}
This study aims to calculate the minimal dimensions $d_{\tA}, d_{\tB}$ of the exact local compressions. It is sufficient to consider only the compressions on $B$ (or $A$) because of the symmetry of the problem.

\subsection{Koashi-Imoto decomposition}\label{sec:subKI-dec}
In Ref.~\cite{Koashi2002-vv}, Koashi and Imoto analyzed the structure of quantum operations that remain a set of classically labeled quantum states $\{\rho^x_B\}_{x\in\chi}$ unchanged. They found that for any set $\{ \rho_B^x\}$, there is a unique decomposition of $\cH_B$ that completely characterizes CPTP-maps preserving $\{\rho_B^x\}$. This idea was generalized to a fully quantum setup in Ref.~\cite{Hayden2004-os}. It has been shown that for any (finite-dimensional) quantum bipartite state $\rho_{AB}$, there is a unique decomposition called {\it the  Koashi-Imoto decomposition}. 
\begin{definition}
Consider a direct sum decomposition 
\begin{align}
    \cH_B&\cong\bigoplus_i\cHil\otimes\cHir\label{eq:KI-decomposition0}\\
    \rho_{AB}&=\bigoplus_i p_i \rho_{AB_i^L}\otimes \omega_{B_i^R}\,,\label{eq:KI-decomposition}
\end{align}
where $\{p_i\}$ is a probability distribution and $\rho_{AB_i^L}$ and $\omega_{B_i^R}$ are the states on $\cH_A\otimes\cHil$ and $\cHir$, respectively.
    A decomposition in Eqs.~\eqref{eq:KI-decomposition0}-\eqref{eq:KI-decomposition} is said to be the {\it Koashi-Imoto decomposition} if for any CPTP-map $\cE_B$ satisfying
    \begin{equation*}
        \cE_B(\rho_{AB})=\rho_{AB}\,,
    \end{equation*}
    any $\cE_B$’s Stinespring dilation isometry $V_{B\to BE}$ defined by
    \begin{equation*}
        \cE_B(\cdot)=\tr_E\left(V_{B\to BE}\cdot (V_{B\to BE})^\dagger \right)
    \end{equation*}
    is decomposed into 
    \begin{equation*}
        V_{B\to BE}=\bigoplus_iI_{B_i^L} \otimes V_{B_i^R\to B_i^R E}
    \end{equation*}
    satisfying 
    \begin{equation*}
        \tr_E\left(V_{B_i^R\to B_i^R E}\:\omega_{B_i^R}(V_{B_i^R\to B_i^R E})^\dagger\right)=\omega_{B_i^R}\,,\quad\forall i.
    \end{equation*}
\end{definition}
The Koashi-Imoto decomposition can be considered as a special case of the factorization theorem in Refs.~\cite{Mosonyi2004-wo, Jencova2006-on}, applicable for a general set of states, which is a quantum analogue of the classical Fisher–Neyman factorization theorem~\cite{Fisher22,Neyman1936-ht,Halmos1949-qb}. 

The factorization theorem~\cite{Mosonyi2004-wo, Jencova2006-on} further demonstrates that, if Eq.~\eqref{eq:recovery0} holds, then there exists a decomposition $\cH_\tB\cong\bigoplus_i\cH_{\tB_i^L}\otimes\cH_{\tB_i^R}$  such that $\cE_{B\to \tB}$ in Eq.~\eqref{eq:recovery0}, restricted on subspace $\cH_{\tB_i^L}\otimes\cH_{\tB_i^R}$, must have the form
\begin{equation*}
    \left.\cE_{B\to \tB}\:\right\vert_{\tB_i^L\tB_i^R}=\cU_{B_i^L\to \tB_i^L}\otimes \beta_{B_i^R\to\tB_iR }\,,
\end{equation*}
where $\cU_{B_i^L\to \tB_i^L}$ is a unitary map (therefore, $\cH_{B_i^L}\cong \cH_{\tB_i^L}$) and $\beta_{B_i^R\to\tB_iR }$ is a CPTP map. This implies that minimal compression is achieved when 
\begin{align*}
    \cH_{\tB}&\cong\bigoplus_i\cHil\\
    \rho_{A\tB}&=\bigoplus_i p_i \rho_{AB_i^L}\,.
\end{align*}
Note that precisely speaking, $\cH_{B_i^L}$ in the Koashi-Imoto decomposition are not necessarily orthogonal each other. However, one can always consider $\bigoplus_i\cH_{B_i^L}\ot|i\rangle\langle i|\cong\bigoplus_i\cH_{B_i^L}$ to orthogonalize them.

We summarize this subsection as follows.
\begin{fact}
    For a given $\rho_{AB}$, consider the Koashi-Imoto decomposition~\eqref{eq:KI-decomposition}. The minimal dimension of the exact compression for $\rho_{AB}$ is given by 
    \begin{equation}\label{eq:mindim}
        d_\tB^{\min}:=\sum_id_{B_i^L}\,.
    \end{equation}
\end{fact}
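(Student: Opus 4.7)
My plan is to establish Eq.~\eqref{eq:mindim} as a matching pair of achievability and converse bounds, both drawn directly from the Koashi--Imoto decomposition~\eqref{eq:KI-decomposition} and the factorization theorem of~\cite{Mosonyi2004-wo, Jencova2006-on} already invoked in the excerpt.

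\textbf{Achievability} ($d_\tB^{\min}\leq\sum_i d_{B_i^L}$). Using Eq.~\eqref{eq:KI-decomposition}, I would set $\cH_\tB:=\bigoplus_i\cH_{B_i^L}$ and define the encoder $\cE_{B\to\tB}$ blockwise as $\id_{B_i^L}\otimes\tr_{B_i^R}$ on the $i$-th summand, extended trivially to any orthogonal complement of $\mathrm{supp}\,\rho_B$. The decoder $\cR_{\tB\to B}$ acts on the $i$-th sector by appending the fixed state $\omega_{B_i^R}$ and embedding into the appropriate direct summand of $\cH_B$. Substituting into Eq.~\eqref{eq:KI-decomposition} gives $\cR\circ\cE(\rho_{AB})=\bigoplus_i p_i \rho_{AB_i^L}\otimes\omega_{B_i^R}=\rho_{AB}$, so this pair realizes a compression of output dimension $\sum_i d_{B_i^L}$.

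\textbf{Converse} ($d_\tB^{\min}\geq\sum_i d_{B_i^L}$). Suppose $(\cE_{B\to\tB},\cR_{\tB\to B})$ satisfies Eq.~\eqref{eq:recovery0}. Then $\Phi:=\cR\circ\cE$ is a CPTP map on $B$ that fixes $\rho_{AB}$. The defining property of the Koashi--Imoto decomposition constrains any Stinespring dilation of $\Phi$ into the block form $\bigoplus_i I_{B_i^L}\otimes V_{B_i^R\to B_i^R E}$, and the factorization theorem quoted in the excerpt then propagates this structure through $\cE_{B\to\tB}$ itself: there exist an orthogonal direct-sum decomposition $\bigoplus_i\cH_{\tB_i^L}\otimes\cH_{\tB_i^R}\subseteq\cH_\tB$, a unitary $\cU_{B_i^L\to\tB_i^L}$, and a CPTP map $\beta_{B_i^R\to\tB_i^R}$ such that $\cE_{B\to\tB}$ restricted to the $i$-th block equals $\cU_{B_i^L\to\tB_i^L}\otimes\beta_{B_i^R\to\tB_i^R}$. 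Unitarity forces $d_{\tB_i^L}=d_{B_i^L}$, and orthogonality of the sum yields $d_\tB\geq\sum_i d_{\tB_i^L}\,d_{\tB_i^R}\geq\sum_i d_{B_i^L}$.

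The main obstacle I expect is the careful invocation of the factorization theorem on one side only. The theorem is naturally stated for CPTP maps preserving $\rho_{AB}$, so I must verify that passing from the two-step preserving map $\Phi$ to the single-step encoder $\cE$ indeed yields an \emph{orthogonal} block decomposition on $\cH_\tB$ (rather than just an abstract isomorphism), so that the dimensions add rather than collapse. The orthogonalization trick flagged in the excerpt for the Koashi--Imoto blocks in $\cH_B$ can be transported through the unitaries $\cU_{B_i^L\to\tB_i^L}$ to accomplish this. Once this structural point is settled, the dimension count matches on both sides and~\eqref{eq:mindim} follows.
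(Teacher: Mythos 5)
Your proposal is correct and follows essentially the same route as the paper: the paper justifies this Fact by the preceding discussion in Section~II.C, which invokes the factorization theorem of~\cite{Mosonyi2004-wo, Jencova2006-on} to force $\cE_{B\to\tB}$ into the block form $\cU_{B_i^L\to\tB_i^L}\otimes\beta_{B_i^R\to\tB_i^R}$ with $\cU$ unitary, exactly as in your converse, with achievability by tracing out the $B_i^R$ factors left implicit. Your write-up merely makes the achievability direction and the orthogonalization of the $\tB$-blocks explicit, which the paper glosses over.
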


\subsection{Minimal sufficient subalgebra}
Here we introduce another characterization of the Koashi-Imoto decomposition based on the theory of quantum sufficiency. Consider a set of density matrices on $B$
\begin{equation}\label{def:mu}
    \cS:=\left\{\mu_B=\frac{\tr_A\left(M_A\rho_{AB}\right)}{\tr\left(M_A\rho_{A}\right)} \;|\; 0\leq M_A\leq I_A\right\}\,,
\end{equation}
dominated by $\rho_B$ (i.e., $\rho_B\in\cS$ and $\forall\mu_B\in\cS$, $\supp(\mu_B)\subset\supp(\rho_B)$). In terms of $\cS$, Eq.~\eqref{eq:recovery0} is rephrased as~\cite{Hayden2004-os,mixedphase1} 
\begin{equation}\label{eq:recovery1}
    \cR_{\tB\to B}\circ\cE_{B\to \tB}(\mu_{B})=\mu_{B}\,,\forall \mu_B\in\cS.
\end{equation}
This kind of condition has been extensively studied in the theory of sufficiency~\cite{Petz1986-nm,Petz1988-ht}, and it is known that the map $\cR_{\tB\to B}$ can always be selected as the Petz recovery map $\cR^{\rho,\cE}_{\tB\to B}$ with respect to $\rho_B$.

The theory of sufficiency states that the minimal dimension of $\tB$ is associated with the {\it minimal sufficient subalgebra} $\cM_B^S\subset\cB(\cH_B)$ of $\cS$. Because $\cM^S_B$ is a finite-dimensional $C^*$-algebra, a decomposition 
\begin{equation}\label{eq:decHBminimal}
 \cH_B\cong\bigoplus_i\cHil\otimes\cHir   
\end{equation}
exists such that
\begin{align}
\cM_B^S\cong &\bigoplus_i \cB(\cHil)\ot I_{B_i^R}\,,\label{eq:minimaldecoH}\\
(\cM_B^S)'\cong &\bigoplus_i I_{B_i^L}\otimes \cB(\cHir)\,,\nonumber
\end{align}
where $(\cM_B^S)':=\{X_B\in\cB(\cH_B)| [X_B,Y_B]=0\,,\; \forall Y_B\in \cM_B^S\}$ is the commutant of $\cM_B^S$. 
The sufficiency theorem~\cite{Petz1988-ht} states that Eq.~\eqref{eq:recovery1} implies $\cH_\tB$ must support algebra isomorphic to $\cM^S_B$. 
Therefore, the minimal Hilbert space $\cH_\tB$ must be:
\begin{equation*}
    \cH_\tB\cong \bigoplus_i\cHil\,.
\end{equation*}

As discussed in Sec.~\ref{sec:subKI-dec}, decomposition~\eqref{eq:decHBminimal} must match that of the Koashi-Imoto decomposition~\eqref{eq:KI-decomposition0}: an algorithm can be used to calculate the Koshi-Imoto decomposition~\cite{Koashi2002-vv,Yamasaki2019-pf}; however, these algorithms do not have an explicit formula to calculate the decomposition or the dimension $d_\tB$. For a similar reason, calculating the optimal compression rate in the asymptotic results in Ref.~\cite{Khanian2019-bg} is a non-trivial task.

\section{Main result}\label{sec:MR}
\subsection{Sketch of the proof}
In this section, we provide a sketch of the strategy to obtain our main result. The explicit proof is given in Sec.~\ref{sec:3} and Sec.~\ref{sec:calculate}, and readers who are interested in the statement of the main theorem can skip this subsection. 

Our main goal is to obtain the optimal dimension~\eqref{eq:mindim} from a given bipartite density matrix $\rho_{AB}$, without explicitly calculating the Koashi-Imoto decomposition. The main proof consists of two steps. First, we characterize $(\cM_B^S)'$ as the largest subalgebra of the fixed-point algebra of a certain CPTP-map $\cT_B$ that is invariant under modular transformation $\Delta_{\rho_B}^t(\cdot)=\rho_B^{it}\cdot \rho_B^{-it}$ (Sec.~\ref{sec:3}).

To obtain the characterization, notice that the Koashi-Imoto decomposition~\eqref{eq:KI-decomposition} implies that 
\begin{equation*}
    J_{AB}=\bigoplus_i \rho^{-\half}_{B_i^L}\rho_{AB_i^L}\rho^{-\half}_{B_i^L}\otimes I_{B_i^R}\,.
\end{equation*}
This is the Choi-Jamiołkowski representation of a unital CP-map $\Omega_{A\to B}^\dagger$. A recent result in quantum sufficiency~\cite{Jencova2012-sy} shows that the minimal sufficient subalgebra $\cM_B^S$ is generated by $\Delta_{\rho_B}^{t}\circ\Omega^\dagger_{A\to B}(M_A)$ for $M_A\in\cB(\cH_A), t\in\mathbb{R}$.
Although it is still hard to specify the explicit algebra structure of $\cM_B^S$, one can characterize the commutant $(\cM_B^S)'$ as the set of operators commuting with these generators. 
Since the generators of $\cM_B^S$ form the images of unital CP-maps $\Delta^t_{\rho_B}\circ\Omega^\dagger_{A\to B}$, a standard theorem in operator-algebra quantum error correction~\cite{Beny2007-ua} on the commutant of a unital CP-map provides an interpretation of $(\cM_B^S)'$ as the intersection of the correctable algebras of CPTP-maps parameterized by $t\in\mathbb{R}$. Further application of another result in error correction~\cite{Chen2020-in} implies that $(\cM_B^S)'$ is the largest subalgebra of the fixed points of $\cT_B:=\cR^{\tau,\Omega_{E\to B}}_{E\to B}\circ\Omega_{B\to E}$, where $\Omega_{B\to E}:=(\Omega_{B\to A})^c$, that is invariant under $\Delta_{\rho_B}^t$.

In the second (Sec.~\ref{sec:calculate}), we obtain the Choi state of the conditional expectation $\mathbb{E}_B$ on $(\cM_B^S)'$ by using this characterization. A conditional expectation is an idempotent positive linear map onto a subalgebra, and thus it is a projection as a superoperator (an operator acting on the vector space of operators). The projection onto the fixed point algebra $\cF(\cT_{B})$ is obtained as the projection with eigenvalue 1 in the spectral decomposition of the superoperator representation of $\cT_B$. The invariance under the modular transformation for all $t$ is equivalent to the invariance under the adjoint action ${\rm ad}_{\rho_B}$ of the generator of $\Delta_{\rho_B}^t$, from which we can remove the dependence on $t$. Hence, the projection onto $(\cM_B^S)'$ is obtained from the superoperators corresponding to $\cT_B$~\eqref{def:E_T} and ${\rm ad}_{\rho_B}$~\eqref{def:LR}.

 The Choi state of $\mathbb{E}_B$ is obtained from the superoperator projection via the reshuffling~\cite{Bengtsson_Zyczkowski_2006}. Its canonical purification consists of the direct sum of four maximally entangled states among four copies of $B_i^R, B_i^L$ (Fig.~\ref{fig:Cstates} in Sec.~\ref{sec:calculate}). Simply calculating the rank of $B$ provides $\sum_id_{B_i^R}d_{B_i^L}=d_B$, counting contributions from both the entanglement between $B_i^R$ and its copy and the entanglement between $B_i^L$ and its copy. 
 We then extract the optimal dimension~\eqref{eq:mindim} by only removing the entanglement between $B_i^R$ and its copy via an optimization over bipartite unitaries. This final step enables us to count the contribution only from $B_i^L$. 

\subsection{The statement of the main result}
To state the main theorem, we start from the following lemma (Lemma~\ref{lem:Minimals} (iv) in Sec.~\ref{sec:3}).
\begin{lem*}
 Define $\cT_B:=\cR^{\tau,\Omega_{E\to B}}_{E\to B}\circ\Omega_{B\to E}$, the composition of $\Omega_{B\to A}^c$ (the complementary map of $\Omega_{B\to A}:=(\Omega^\dagger_{A\to B})^\dagger$) and the Petz recovery map $\cR^{\tau,\Omega_{E\to B}}_{E\to B}$. Then, $(\cM_B^S)'$ is the largest subalgebra of $\cF(\cT_B)$ which is invariant under ${\rm ad}_{\rho_B}(\cdot)=[\cdot,\log\rho_B]$.
\end{lem*}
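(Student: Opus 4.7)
The plan is to obtain $(\cM_B^S)'$ by computing the commutant of an explicit generating set for $\cM_B^S$ and then converting that commutant into a fixed-point algebra. I would begin from Jen\v{c}ov\'a's characterization~\cite{Jencova2012-sy}, which (as used in the sketch) asserts that $\cM_B^S$ is the $C^*$-algebra generated by $\{\Delta_{\rho_B}^t\circ\Omega^\dagger_{A\to B}(M_A):M_A\in\cB(\cH_A),\,t\in\mathbb{R}\}$. Taking commutants and using that each $\Delta_{\rho_B}^t$ is a $*$-automorphism, so that $\Delta_{\rho_B}^t(\cN)'=\Delta_{\rho_B}^t(\cN')$ for any subspace $\cN\subseteq\cB(\cH_B)$, gives
\begin{equation*}
(\cM_B^S)'=\bigcap_{t\in\mathbb{R}}\Delta_{\rho_B}^t\bigl(\Omega^\dagger_{A\to B}(\cB(\cH_A))'\bigr).
\end{equation*}

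The next step is to identify the inner commutant with a correctable algebra. Since $\Omega^\dagger_{A\to B}$ is, up to environment isometry, the Heisenberg dual of the complementary channel of $\Omega_{B\to E}:=(\Omega_{B\to A})^c$, its image is exactly the span $\{G_\alpha^\dagger G_\beta\}$ of Kraus products of $\Omega_{B\to E}$. Hence the Beny--Kribs OAQEC identity~\cite{Beny2007-bw,Beny2007-ua} gives $\Omega^\dagger_{A\to B}(\cB(\cH_A))'=\cA_{\Omega_{B\to E}}$, reducing the problem to the largest subalgebra of $\cA_{\Omega_{B\to E}}$ invariant under all $\Delta_{\rho_B}^t$. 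Because $\{\Delta_{\rho_B}^t\}$ is a one-parameter group of $*$-automorphisms with generator $i\,{\rm ad}_{\rho_B}$, and we are in finite dimensions (so orbits are real-analytic), stability under every $t$ is equivalent to ${\rm ad}_{\rho_B}$-invariance.

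It remains to replace $\cA_{\Omega_{B\to E}}$ by $\cF(\cT_B)$. For this I would invoke the fixed-point reformulation of the correctable algebra~\cite{Chen2020-in}: for a CPTP map $\cN$, the correctable algebra $\cA_\cN$ is the maximal $*$-subalgebra contained in $\cF(\cR^{\tau,\cN}\circ\cN)$, the fixed-point set of the composition of $\cN$ with its Petz recovery against the maximally mixed input. Applied to $\cN=\Omega_{B\to E}$ this produces precisely $\cT_B=\cR^{\tau,\Omega_{E\to B}}_{E\to B}\circ\Omega_{B\to E}$, so $\cA_{\Omega_{B\to E}}$ is the largest subalgebra of $\cF(\cT_B)$. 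Combining with the ${\rm ad}_{\rho_B}$-invariance from the previous step yields the claim: the largest ${\rm ad}_{\rho_B}$-invariant subalgebra of $\cF(\cT_B)$ equals $(\cM_B^S)'$.

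The main obstacle will be bookkeeping across these three algebraic characterizations. Concretely, I have to verify that the complementary channel $\Omega_{B\to E}$, defined only up to an environment isometry, yields a correctable algebra independent of that choice; that the notation $\cR^{\tau,\Omega_{E\to B}}_{E\to B}$ genuinely refers to the Petz recovery appearing in the Chen et al.\ statement; and that the passage from ``stable under every $\Delta_{\rho_B}^t$'' to ``${\rm ad}_{\rho_B}$-closed'' holds at the level of subalgebras, not merely of subspaces, of $\cB(\cH_B)$. None of these points is individually deep, but threading them together cleanly is where the technical care will lie.
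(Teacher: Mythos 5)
Your proposal is correct and follows essentially the same route as the paper's proof of Lemma~1: Jen\v{c}ov\'a's generators for $\cM_B^S$, the B\'eny--Kribs identity $\cA_{\cE^c}=({\rm Im}\,\cE^\dagger)'$, the Chen et al.\ fixed-point reformulation of the correctable algebra, and the analyticity argument converting invariance under all $\Delta_{\rho_B}^t$ into ${\rm ad}_{\rho_B}$-closedness. The only cosmetic difference is that you pull the modular automorphism outside the commutant at the start (obtaining $\bigcap_t\Delta_{\rho_B}^t(\cA_{\Omega_{B\to E}})$ and applying the Petz fixed-point lemma once at $t=0$), whereas the paper conjugates the Petz map by $\Delta_{\rho_B}^{\pm t}$ inside each term; these are equivalent.
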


We are going to calculate the conditional expectation onto $(\cM_B^S)'$, which we denote by $\mathbb{E}_B$, from $\rho_{AB}$. To do so, we first show a way to calculate the supeoperator representation of $\cT_B$. Let us denote the nonzero eigenvalues of $J_{AB}$~\eqref{def:Choi1} as $\omega_i$. Subsequently, the Kraus operators $\{K_i^\dagger\}$ of $\Omega_{A\to B}^\dagger(\cdot)=\sum_iK_i^\dagger\cdot K_i$, which is defined in Eq.~\eqref{def:Omega1}, satisfy $\tr(K_i^\dagger K_j)=\omega_i\delta_{ij}$: 
We then define a deformed CP-map ${\tilde \Omega}^\dagger_{A\to B}$ as 
\begin{equation}\label{def:tOmega}
    {\tilde \Omega}^\dagger_{A\to B}(\cdot)=\sum_{i=1}^{\rank(\rho_{AB})}\omega^{-\half}_iK_i^\dagger\cdot K_i\,.
\end{equation}
${\tilde \Omega}^\dagger_{A\to B}$ is no longer unital but CP. The Choi operator of ${\tilde \Omega}^\dagger_{A\to B}$ is given by $\sqrt{J_{AB}}$. Then the superoperator representation of $\cT$, $E_\cT$, is given as
\begin{equation}\label{def:E_T}
      E_\cT=\sum_{a,b=1}^{d_A}{\tilde \Omega}^\dagger_{A\to B}(|a\rag\lag b|)\otimes\overline{ {\tilde \Omega}^\dagger_{A\to B_1}(|a\rag\lag b|)}\,, 
\end{equation}
where $\cH_{B_1}\cong \cH_{B}$ and $\{|a\rag\}$ is an orthonormal basis of $\cH_A$  (see Sec.~\ref{sec:ETcal} for the proof).

The linear map ${\rm ad}_{\rho_B}(\cdot)$ also has the superoperator representation as 
\begin{align}
RL_B:=I_B\otimes\log\rho_{B_1}^{T_{B_1}}-\log\rho_B\otimes I_{B_1}\label{def:LR}\,.
\end{align}
$RL_B$ corresponds to the generator of the modular transformation $\Delta_{\rho_B}^{t}$, which plays an important role in the theory of sufficiency~\cite{Petz1986-nm,Petz1988-ht}.

The conditional expectation $\mathbb{E}_B$ is then obtained via the spectral decompositions of these operators:
\begin{align}
   E_\cT&=\bigoplus_{\lambda}\lambda P_\lambda\label{eq:specET}\\
   RL_B&=\bigoplus_\eta\eta Q_\eta\,\label{eq:specRL}\,,
\end{align}
where $P_\lambda$ and $Q_\eta$ are orthogonal projections to the eigensubspaces of $E_\cT$ and $RL_B$, respectively.
Especially, $P_1 \:(\lambda=1)$ is the projection onto the space of the fixed-points of $\cT_B$.
Denote the set of eigenvalues of $A$ by ${\rm spec}(A)$ and define $P_V$ as the projector onto the subspace
\begin{equation*}
    V:= \bigoplus_{\eta\in{\rm spec}(RL)}\left(\supp(Q_\eta)\cap \supp(P_1)\right)\,,
\end{equation*}
which is the subspace of the fixed point of $\cT$ invariant under $\Delta_{\rho_B}^{t}$. 
Using the formula given in ~\cite{ANDERSON1969576}, the projection onto $V$ is
\begin{equation*}
    P_V= 2\bigoplus_{\eta\in{\rm spec}(RL)}Q_\eta (Q_\eta+P_1)^{-1}P_1\,,
\end{equation*}
where $^{-1}$ is the Moore-Penrose inverse. From the lemma in the above, $P_V$ is the superoperator representation of $\mathbb{E}_B$. 

Consider systems $\cH_{\bar B}\cong \cH_{\bar B_1}\cong \cH_B$ and define $|\cI\rrag_{BB_1}:=\sum_i|ii\rag_{BB_1}$ in the basis of the complex conjugate/transposition in Eq. ~\eqref{def:E_T} and \eqref{def:LR}. 
The normalized Choi state $C_{BB_1}$ of $\mathbb{E}_{B}$ is defined as follows:
\begin{align}
C_{BB_1}&:=\frac{1}{d_B}(\id_{B}\otimes\mathbb{E}_{B_1})\left(|\cI\rrag\llag\cI|_{BB_1}\right) \nonumber \\
&=\bigoplus_i \frac{d_{B_i^L}d_{B_i^R}}{d_B} \tau_{B_i^L}\otimes |\Psi_i\rrag\llag\Psi_i|_{B_i^R{B_{1i}^R}}\otimes \tau_{B_{1i}^L}\label{eq:Cbb1} \,,
\end{align}
where each $|\Psi_i\rrag$ is a maximally entangled state. 
$C_{BB_1}$ and $P_V$ are related via the reshuffling map~\cite{Bengtsson_Zyczkowski_2006}:
\begin{equation*}
    C_{BB_1}=\frac{1}{d_B}\sum_{i,j=1}^{d_B}(I_B\otimes|i\rag\lag j|_{B_1})P_V(|i\rag\lag j|_{B}\otimes I_{B_1}).
\end{equation*}

The next step is extracting the optimal dimension~\eqref{eq:mindim} from $C_{BB_1}$. Consider the canonical purification of $C_{BB_1}$, which is the purification in an eigenbasis of $C_{BB_1}$, denoted as $|C\rag_{BB_1{\bar B}{\bar B_1}}$.
We then optimize over all unitaries on ${\bar B}{\bar B}_1$ to minimize the entanglement entropy between $B{\bar B}$ and $B_1{\bar B_1}$:
\begin{equation}\label{eq:minU}
    {\tilde U}_{{\bar B\bar B_1}}:=\argmin_U \:S(B{\bar B})_{U_{{\bar B}{\bar B_1}}|C\rag}\,.
\end{equation}
The optimization can be performed using, for example, a gradient algorithm~\cite{PhysRevLett.126.120501}.

The main theorem of this work is showing that the minimal dimension of exact local compression is given by the Schmidt rank of ${\tilde U}_{{\bar B\bar B_1}}|C\rag_{BB_1{\bar B}{\bar B_1}}$. The proof is  presented in Sec.~\ref{sec:calculate}.
\begin{thm}\label{thm:main}
For any $\rho_{AB}$, the minimal dimension of any exact local compression is given by
    \begin{equation}\label{eq:minimal dimension}
        d_\tB^{\min}=\sum_{i}d_{B^L_i}=\Sch\left(BB_1{\bar B_1}:{\bar B}\right)_{|\tilde C\rag}\,,
    \end{equation}
     where $\cH_B\cong\bigoplus_i\cH_{B_i^L}\otimes\cH_{B_i^R}$ is the Koashi-Imoto decomposition of $\cH_B$ and
    \begin{equation}
        |{\tilde C}\rag_{BB_1{\bar B}{\bar B_1}}:={\tilde U}_{{\bar B\bar B_1}}|C\rag_{BB_1{\bar B}{\bar B_1}}\,.
    \end{equation}
It also holds that
    \begin{equation}
        d_{B^R}:=\sum_id_{B^R_i}=\Sch\left(BB_1:{\bar B}{\bar B_1}\right)_{|\tilde C\rag}\,.
    \end{equation}
\end{thm}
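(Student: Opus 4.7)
The plan is to make the canonical purification $|C\rangle$ fully explicit from the block form of $C_{BB_1}$, then analyze the minimizer $\tilde U$ and read off the two Schmidt ranks. First I would purify each sector of $C_{BB_1}$ separately: starting from
\[
C_{BB_1} = \bigoplus_i p_i\,\tau_{B_i^L}\otimes|\Psi_i\rangle\langle\Psi_i|_{B_i^R B_{1i}^R}\otimes \tau_{B_{1i}^L}
\]
with $p_i = d_{B_i^L}d_{B_i^R}/d_B$, the maximally mixed factors $\tau_{B_i^L}$ and $\tau_{B_{1i}^L}$ purify canonically to maximally entangled states on $B_i^L\bar B_i^L$ and $B_{1i}^L\bar B_{1i}^L$, while the already pure $|\Psi_i\rangle_{B_i^R B_{1i}^R}$ is paired with a matching copy $|\Psi_i\rangle_{\bar B_i^R\bar B_{1i}^R}$ on the ancilla side. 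This writes $|C\rangle$ as a direct sum over $i$ of products of four maximally entangled pairs, one per sector, in the pattern advertised in Figure~1 of Section~\ref{sec:calculate}.

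Second, I would establish the achievability half of both Schmidt-rank identities. Since $\tilde U_{\bar B\bar B_1}$ touches only the copy systems, it cannot alter the pure components $|\Phi_i^L\rangle_{B_i^L\bar B_i^L}$, $|\Phi_i^L\rangle_{B_{1i}^L\bar B_{1i}^L}$ or $|\Psi_i\rangle_{B_i^R B_{1i}^R}$, but it can dissolve the auxiliary pair $|\Psi_i\rangle_{\bar B_i^R\bar B_{1i}^R}$. Picking $\tilde U$ block-diagonal in the Koashi--Imoto sectors and sending each $|\Psi_i\rangle_{\bar B_i^R\bar B_{1i}^R}$ to a product state $|0\rangle_{\bar B_i^R}|0\rangle_{\bar B_{1i}^R}$ leaves $|\Phi_i^L\rangle_{B_i^L\bar B_i^L}$ as the only entanglement bridge between $\bar B$ and the complement, which gives $\Sch(BB_1\bar B_1:\bar B)_{|\tilde C\rangle} = \sum_i d_{B_i^L}$. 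An analogous reading across the companion cut isolates the $|\Psi_i\rangle_{B_i^R B_{1i}^R}$ bridge and yields the companion value $\sum_i d_{B_i^R}$. The identification with $d_\tB^{\min}$ is then inherited from the Koashi--Imoto fact recalled in Section~\ref{sec:subKI-dec}.

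The remaining and main obstacle is to certify that this candidate $\tilde U$ really minimizes $S(B\bar B)$. For the candidate I would directly compute $S(B\bar B)_{|\tilde C\rangle} = H(\{p_i\}) + \sum_i p_i\log d_{B_i^R}$ and match it against a universal purification-independent lower bound, for instance the entanglement-of-purification estimate $S(B\bar B)_{|\tilde C\rangle}\ge \tfrac12 I(B:B_1)_{C_{BB_1}}$ strengthened by direct-sum additivity over the Koashi--Imoto sectors, exploiting that each block of $C_{BB_1}$ is the tensor product of a separable factor $\tau_{B_i^L}\otimes\tau_{B_{1i}^L}$ and a pure maximally entangled factor $|\Psi_i\rangle\langle\Psi_i|$. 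Saturation of the bound would force $|\tilde C\rangle$ to respect the block factorization of $|C\rangle$, locking the Schmidt ranks at the claimed values. The technical delicacy is ruling out cleverer unitaries that permute sectors or mix $L$- and $R$-factors on $\bar B\bar B_1$ in a way that might, a priori, beat the candidate; the equality conditions for subadditivity (equivalently, for the mutual-information bound) should supply the rigidity needed to exclude such alternatives.
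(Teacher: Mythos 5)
Your first two steps (the explicit four-pair form of $|C\rag_{BB_1{\bar B}{\bar B}_1}$ and the block-diagonal unitary that disentangles each $|\Psi_i\rrag_{{\bar B}_i^R{\bar B}_{1i}^R}$) coincide with the paper's construction and correctly establish achievability. The gap is in your optimality step, which is where the content of the theorem lies. The specific lower bound you propose does not close: for $C_{BB_1}=\bigoplus_i p_i\,\tau_{B_i^L}\ot|\Psi_i\rrag\llag\Psi_i|_{B_i^RB_{1i}^R}\ot\tau_{B_{1i}^L}$ one computes $\tfrac{1}{2}I(B:B_1)_C=\tfrac{1}{2}H(\{p_i\})+\sum_i p_i\log d_{B_i^R}$, while the candidate minimum is $H(\{p_i\})+\sum_i p_i\log d_{B_i^R}$, so the mutual-information bound is strictly loose by $\tfrac{1}{2}H(\{p_i\})$ whenever more than one sector has $p_i>0$. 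The ``direct-sum additivity'' of the entanglement of purification that you invoke to repair this is exactly the statement that no sector-mixing unitary on ${\bar B}{\bar B}_1$ can beat the block-diagonal one --- i.e., it is the rigidity you concede you have not proved, so the argument is circular at its key point. Moreover, even a certified optimal value would not by itself fix the Schmidt ranks of the minimizer $|\tilde C\rag$; for that you need a characterization of \emph{all} minimizers, not just a value match.

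The paper gets both at once by an elementary computation that your plan misses: since $U_{{\bar B}{\bar B}_1}$ never acts on $B$ or $B_1$, for \emph{every} unitary the state retains the form $\sum_i\sqrt{p_i}\,|\Psi_i\rrag_{B_i^RB_{1i}^R}|\Phi_i\rrag_{B_i^L{\bar B}{\bar B}_1B_{1i}^L}$ with $|\Phi_i\rrag$ pure, whence the reduced state on $B{\bar B}$ is block diagonal, $\bigoplus_i p_i\,\tau_{B_i^R}\ot\Phi_{i,B_i^L{\bar B}}$, and $S(B{\bar B})_{UCU^\dagger}=H(\{p_i\})+\sum_i p_i\bigl(\log d_{B_i^R}+S(B_i^L{\bar B})_{\Phi_i}\bigr)$ exactly. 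Only the final, manifestly nonnegative term depends on $U$, so the minimum value and the structure of every minimizer (each $\Phi_{i,B_i^L{\bar B}}$ pure with its $B_i^L$-marginal pinned to $\tau_{B_i^L}$) follow immediately, and the two Schmidt ranks in the theorem can then be read off. You should replace the bound-matching strategy by this decomposition; as written, your proof does not go through.
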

From the expression of $C_{BB_1}$ in \eqref{eq:Cbb1}, the following holds:
\begin{cor}
\begin{equation}\label{cor:bound}
    \rank(C_{BB_1})=\sum_{i}d_{B^L_i}^2\,
\end{equation}
and
    \begin{equation}
        \sqrt{\rank(C_{BB_1})}\leq d_\tB^{\min}\leq \rank(C_{BB_1})\,.
    \end{equation}
\end{cor}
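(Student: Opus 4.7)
The plan is to read off both the rank of $C_{BB_1}$ and the bounds directly from the explicit direct-sum expression in Eq.~\eqref{eq:Cbb1}, combined with the identity $d_{\tB}^{\min} = \sum_i d_{B_i^L}$ already established in Theorem~\ref{thm:main}. The only new content is a rank count and two elementary inequalities; no further quantum-information machinery should be required.

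First, I would compute $\rank(C_{BB_1})$ summand by summand. Each term of Eq.~\eqref{eq:Cbb1} is a tensor product $\tau_{B_i^L}\otimes|\Psi_i\rrag\llag\Psi_i|_{B_i^R B_{1i}^R}\otimes\tau_{B_{1i}^L}$ of three operators whose ranks are respectively $d_{B_i^L}$, $1$, and $d_{B_{1i}^L}=d_{B_i^L}$ (the maximally entangled projector is rank one, and each completely mixed state is full-rank on its own subspace). Hence each summand has rank $d_{B_i^L}^2$. Because the decomposition $\cH_B\cong\bigoplus_i\cH_{B_i^L}\otimes\cH_{B_i^R}$ (and the analogous one for $B_1$) is orthogonal, the supports of different summands are mutually orthogonal, so the ranks add and we obtain $\rank(C_{BB_1})=\sum_i d_{B_i^L}^2$, which is Eq.~\eqref{cor:bound}.

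Second, for the two-sided estimate on $d_{\tB}^{\min}$, I would substitute $d_{\tB}^{\min}=\sum_i d_{B_i^L}$ and abbreviate $n_i := d_{B_i^L}\geq 1$. The upper bound reduces to $\sum_i n_i \leq \sum_i n_i^2$, which holds termwise because $n_i\geq 1$. The lower bound $\sqrt{\sum_i n_i^2}\leq \sum_i n_i$ is, after squaring, the identity $\sum_i n_i^2 \leq \bigl(\sum_i n_i\bigr)^2 = \sum_i n_i^2 + 2\sum_{i<j}n_i n_j$, which is immediate since the cross terms are nonnegative.

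I do not anticipate any real obstacle: the statement is an immediate consequence of Theorem~\ref{thm:main} together with the explicit form of $C_{BB_1}$ in Eq.~\eqref{eq:Cbb1}. The only point that deserves a one-line justification is that the direct sum in Eq.~\eqref{eq:Cbb1} really is orthogonal so that the ranks of the summands add, which follows from the direct-sum structure of the Koashi--Imoto decomposition itself.
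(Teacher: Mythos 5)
Your proposal is correct and follows essentially the same route as the paper, which simply asserts that the corollary follows from the explicit block-diagonal form of $C_{BB_1}$ in Eq.~\eqref{eq:Cbb1}; your rank count per summand ($d_{B_i^L}\cdot 1\cdot d_{B_i^L}$ on mutually orthogonal blocks) and the two elementary inequalities with $n_i=d_{B_i^L}\ge 1$ are exactly the intended justification.
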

Unlike Eq.~\eqref{eq:minimal dimension}, calculating Eq.~\eqref{cor:bound} does not require any optimization.

\section{Characterizing minimal sufficient subalgebra}\label{sec:3}
We examine the details of the algebras $\cM_B^S$ and $(\cM_B^S)'$.  
Originally, Petz demonstrated that $\cM_B^S$ has the form~\cite{Petz1986-nm}
\begin{equation}
    \cM_B^S=\alg\left\{\mu^{it}_B\rho^{-it}_B, \mu\in\cS, t\in\mathbb{R} \right\}\,
\end{equation}
in a finite-dimensional system. More recently, 
Ref.~\cite{Jencova2012-sy} showed that it can also be written as
\begin{equation}
    \cM_B^S=\alg\left\{\rho^{it}_Bd(\mu,\rho)\rho^{-it}_B, \mu\in\cS, t\in\mathbb{R} \right\}\,,
\end{equation}
where $d(\mu,\rho):=\rho^{-\half}\mu\rho^{-\half}$ is the {\it commutant Radon-Nikodym derivative} introduced in Ref.~\cite{Jencova2012-sy}. These alternative generators are positive semi-definite and easy to treat sometimes. 

By inserting the definition of $\mu_B$~\eqref{def:mu}, $d(\mu,\rho)$ can be rewritten as
\begin{align}
d(\mu,\rho)&=\rho_{B}^{-\half}\frac{\tr_A\left(M_A\rho_{AB}\right)}{\tr\left(M_A\rho_{A}\right)}\rho_{B}^{-\half}\nonumber\\
    &=\frac{1}{\tr\left(M_A\rho_{A}\right)}\tr_{A}\left(\rho_{B}^{-\half}\rho_{AB}\rho_{B}^{-\half}(M_A\otimes I_B)\right)\nonumber\\
    &=\frac{1}{\tr\left(M_A\rho_{A}\right)}\tr_{A}\left(J_{AB}(M_A\otimes I_B)\right)\nonumber\\
    &=\frac{\Omega^\dagger_{A\to B}(M_A^{T_A})}{\tr\left(M_A\rho_{A}\right)}\,,\label{eq:cRNd1}
\end{align}
where $J_{AB}$ is defined by Eq.~\eqref{def:Choi1} and $\Omega_{A\to B}^\dagger$ is defined in Eq.~\eqref{def:Omega1}. 

It might be easier to characterize $(\cM_B^S)'$ than $\cM_B^S$ in our setting. Denote the complementary channel of $\Omega_{B\to A}$ by $\Omega_{B\to E}$.  Recall that $\cF(\cE^\dagger)$ is the fixed-point algebra of a unital CP-map $\cE^\dagger$ and $\cA_\cE$ is the correctable algebra of a CPTP-map $\cE$.  We then obtain the following characterizations: 
\begin{lem}\label{lem:Minimals}
$(\cM_B^S)'$ is equivalent to 
\begin{itemize}
    \item[(i)] The intersection of the correctable algebras 
    \begin{equation}
        \bigcap_{t\in\mathbb{R}}\cA_{\left(\Omega_{B\to E}\circ\Delta_{\rho_B}^t\right)}
    \end{equation} 
    \item[(ii)] The intersection of the fixed-point algebras 
    \begin{equation}
        \bigcap_{t\in\mathbb{R}}\cF\left(\Delta_{\rho_B}^{-t}\circ\cR_{E\to B}^{\tau,\Omega}\circ\Omega_{B\to E}\circ\Delta_{\rho_B}^t\right) 
    \end{equation} 
    \item[(iii)]
The largest subalgebra of $\cF(\cR_{E\to B}^{\tau,\Omega}\circ\Omega_{B\to E})$ that is invariant under $\Delta^t_{\rho_B}$ for any $t\in\mathbb{R}$.
    \item[(iv)]
The largest subalgebra of $\cF(\cR_{E\to B}^{\tau,\Omega}\circ\Omega_{B\to E})$ that is invariant under ${\rm ad}_{\rho_B}(\cdot)$: 
\begin{equation}
    {\rm ad}_{\rho_B}((\cM_B^S)')=[(\cM_B^S)',\log\rho_B]\subset(\cM_B^S)'\,.
\end{equation}
\end{itemize}
Note that $\cR_{E\to B}^{\tau,\Omega}$ is the Petz recovery map with respect to $\tau_B$ for $\Omega_{B\to E}$.
\begin{equation}\label{eq:PetzmapQEC}
    \cR^{\tau,\Omega}_{E\to B}(\cdot)=\Omega_{B\to E}^\dagger\circ\Gamma_{\Omega_{B\to E}(I_{B})}^{-1}(\cdot)\,.
\end{equation}

\end{lem}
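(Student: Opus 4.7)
My plan is to prove the four characterizations in the chain $(\cM_B^S)' = (\text{i}) \Leftrightarrow (\text{ii}) \Leftrightarrow (\text{iii}) \Leftrightarrow (\text{iv})$. I begin by recording, via Jen\v{c}ov\'a's characterization of $\cM_B^S$ together with the rewriting of $d(\mu,\rho)$ in Eq.~\eqref{eq:cRNd1}, that
\begin{equation*}
\cM_B^S = \alg\bigl\{ \Delta_{\rho_B}^t\circ\Omega_{A\to B}^\dagger(X_A)\ :\ X_A\in\cB(\cH_A),\ t\in\mathbb{R}\bigr\},
\end{equation*}
so $(\cM_B^S)'$ consists exactly of those $Y\in\cB(\cH_B)$ commuting with every element in the image of $\Delta_{\rho_B}^t\circ\Omega_{A\to B}^\dagger$ for every $t\in\mathbb{R}$.

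To identify this with (i), I would use the OAQEC duality \cite{Beny2007-ua} between a unital CP-map and the correctable algebra of the associated complementary channel: with $\{F_a\}$ Kraus operators of $\Omega_{B\to E}$, one has $\cA_{\Omega_{B\to E}} = \{F_a^\dagger F_b\}'$, and the linear span of $\{F_a^\dagger F_b\}$ coincides with the image of $\Omega_{A\to B}^\dagger$. A direct computation gives that $\Omega_{B\to E}\circ\Delta_{\rho_B}^t$ has Kraus $\{F_a\rho_B^{it}\}$, hence $\cA_{\Omega_{B\to E}\circ \Delta_{\rho_B}^t} = \Delta_{\rho_B}^{-t}(\cA_{\Omega_{B\to E}})$, where I use that $Y\mapsto\rho_B^{it}Y\rho_B^{-it}$ is a $*$-automorphism and therefore commutes with the commutant operation. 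Intersecting over $t\in\mathbb{R}$ then reproduces precisely the commutation condition that defines $(\cM_B^S)'$. For (i)$\Leftrightarrow$(ii) I invoke the OAQEC identity $\cA_\cE = \cF(\cR^{\tau,\cE}\circ\cE)$ \cite{Chen2020-in}; because $\tau_B$ is a scalar multiple of $I_B$ it is fixed by $\Delta_{\rho_B}^t$, and a short calculation gives
\begin{equation*}
\cR^{\tau,\,\Omega_{B\to E}\circ\Delta_{\rho_B}^t} = \Delta_{\rho_B}^{-t}\circ\cR^{\tau,\Omega}_{E\to B},
\end{equation*}
so composing with $\Omega_{B\to E}\circ\Delta_{\rho_B}^t$ produces precisely the map whose fixed-point algebra defines (ii).

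The equivalences (ii)$\Leftrightarrow$(iii)$\Leftrightarrow$(iv) are then elementary: $X$ lies in (ii) iff $\Delta_{\rho_B}^t(X)\in\cF(\cR^{\tau,\Omega}_{E\to B}\circ\Omega_{B\to E})$ for every $t\in\mathbb{R}$, which exhibits (ii) as a $\Delta_{\rho_B}^t$-invariant $C^*$-subalgebra of that fixed-point algebra; any other such subalgebra sits inside (ii) by the same argument applied at $t=0$, so (ii) is the largest such, which is (iii). Finally $\Delta_{\rho_B}^t = \exp(-it\,{\rm ad}_{\rho_B})$ on the finite-dimensional $\cB(\cH_B)$, so invariance under the full one-parameter group coincides with infinitesimal invariance under the generator ${\rm ad}_{\rho_B}$. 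The main obstacle will be the first step — matching the Kraus-level picture of $\Omega_{B\to E}$ cleanly with the image of $\Omega_{A\to B}^\dagger$, and verifying $\cA_\cE = \cF(\cR^{\tau,\cE}\circ\cE)$ in the form needed here; once the complementation and HS-adjoint conventions are pinned down and these two OAQEC inputs secured, the remaining chain of equivalences is routine algebraic bookkeeping.
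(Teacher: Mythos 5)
Your proposal is correct and follows essentially the same route as the paper: the same Jen\v{c}ov\'a-based identification of $(\cM_B^S)'$ with $\bigcap_t({\rm Im}(\Delta_{\rho_B}^t\circ\Omega_{A\to B}^\dagger))'$, the same two OAQEC inputs ($\cA_{\cE^c}=({\rm Im}\cE^\dagger)'$ and $\cA_\cE=\cF(\cR^{\tau,\cE}\circ\cE)$, the first of which the paper also proves via exactly your Kraus-level computation $F_a^\dagger F_b=\Omega_{A\to B}^\dagger(|a\rag\lag b|)$), the same simplification $\cR^{\tau,\Omega\circ\Delta^t}=\Delta_{\rho_B}^{-t}\circ\cR^{\tau,\Omega}_{E\to B}$, and the same group-versus-generator argument for $(iii)\Leftrightarrow(iv)$, which the paper merely spells out via derivatives and a Taylor expansion. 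No gaps.
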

\begin{proof}
From Eq.~\eqref{eq:cRNd1}, we obtain 
   \begin{align*}
  &X_B\in(\cM_B^S)'  \\
  &\Leftrightarrow [X_B, \rho^{it}_Bd(\mu,\rho)\rho^{-it}_B]=0\,,\;\; \forall t\in\mathbb{R}, \forall \mu_B\in\cS \,,\\
    &\Leftrightarrow \left[X_B, \Delta_{\rho_B}^t\circ \Omega_{A\to B}^\dagger(M_A^{T_A}) \right]=0\,,\;\;\forall t\in\mathbb{R},\: 0\leq \forall M_A\leq I_A\,,\\
    &\Leftrightarrow X_B\in\left({\rm Im} \left(\Delta_{\rho_B}^t\circ \Omega_{A\to B}^\dagger\right) \right)' \,,\;\;\forall t\in\mathbb{R}\,.
\end{align*}
Then, we use a result from operator-algebra quantum error correction theory:
\begin{lem}\label{lem:ac=com}~\cite{Beny2007-ua}
For any CPTP-map $\cE$, let $\cA_{\cE^c}$ be the correctable algebra of the complementary map $\cE^c$. Then, it holds that
    \begin{equation}
        \cA_{\cE^c}=({\rm Im}\cE^\dagger)'\,.
    \end{equation}
\end{lem}
$(\Delta_{\rho_B}^{t}\circ\Omega_{A\to B}^\dagger)^\dagger=\Omega_{B\to A}\circ\Delta_{\rho_B}^{-t}$, and a simple calculation shows that
\begin{equation*}
    \left(\Omega_{B\to A}\circ\Delta_{\rho_B}^t\right)^c=\Omega_{B\to A}^c\circ\Delta_{\rho_B}^t=\Omega_{B\to E}\circ\Delta_{\rho_B}^t\,.
\end{equation*}
Hence, Lemma~\ref{lem:ac=com} implies that 
\begin{equation*}
    \left({\rm Im} \left(\Delta_{\rho_B}^t\circ \Omega_{A\to B}^\dagger\right) \right)'=\cA_{\left(\Omega_{B\to A}\circ\Delta_{\rho_B}^{-t}\right)^c},
\end{equation*}
which completes step $(i)$. To show the equivalence of $(ii)$, we utilize the following facts regarding correctable algebra:
\begin{lem}\cite{Chen2020-in}\label{lem:OAQECFix} For any CPTP-map $\cE$, it holds that 
\begin{equation}    \cA_\cE=\cF\left(\cR^{\tau,\cE}\circ\cE\right)=\cF\left(\cE^\dagger\circ(\cR^{\tau,\cE})^\dagger\right) \,.
\end{equation}
\end{lem}
This lemma implies
\begin{equation}\label{eq:lemma2impli}
    \cA_{\left(\Omega_{B\to A}\circ\Delta_{\rho_B}^t\right)^c}=\cF\left(\cR_{E\to B}^{\tau,\Omega\circ\Delta^t}\circ\Omega_{B\to E}\circ\Delta_{\rho_B}^t \right)\,.
\end{equation}

The Petz recovery map $\cR_{E\to B}^{\tau,\Omega\circ\Delta^t}$ with respect to $\tau_B$ for $\Omega_{B\to E}\circ\Delta_{\rho_B}^t$ is simplified as follows:
\begin{align}
    &\cR_{E\to B}^{\tau,\Omega\circ\Delta^t}(\cdot)\nonumber\\
    &:=\left(\Omega_{B\to E}\circ\Delta^t_{\rho_B}\right)^\dagger\left(\Omega_{B\to E}(I_{B})^{-\half}\cdot\Omega_{B\to E}(I_{B})^{-\half}\right)\nonumber\\
    &=\Delta^{-t}_{\rho_B}\left(\Omega_{E\to B}^\dagger\left(\Omega_{B\to E}(I_{B})^{-\half}\cdot\Omega_{B\to E}(I_{B})^{-\half}\right)\right)\,\nonumber\\
    &=\Delta^{-t}_{\rho_B}\circ \cR_{E\to B}^{\tau,\Omega}(\cdot)\,.
\end{align}
Combining this with Eq.~\eqref{eq:lemma2impli} completes $(ii)$. 

To see $(ii)\Leftrightarrow(iii)$, $X_B \in (ii)$ is equivalent to 
\begin{align}
    \Delta_{\rho_B}^{-t}\circ\cR_{E\to B}^{\tau,\Omega}\circ\Omega_{B\to E}\circ\Delta_{\rho_B}^t(X_B)=X_B\,,\: \forall t\in\mathbb{R}&\nonumber\\
    \Leftrightarrow\:\cR_{E\to B}^{\tau,\Omega}\circ\Omega_{B\to E}(X_B(t)) =X_B(t)\,,\:\forall t\in\mathbb{R}&\,,\label{eq:time_invariant}
\end{align}
where $X_B(t):=\Delta_{\rho_B}^t(X_B)$. Therefore, the algebra $(ii)$ is equivalent to a modular invariant subalgebra in $\cF(\cR_{E\to B}^{\tau,\Omega}\circ\Omega_{B\to E})$. It is largest since any element of other modular invariant subalgebra must satisfies Eq.~\eqref{eq:time_invariant}. 

Lastly, we show $(iii)\Leftrightarrow(iv)$. For any $X_B\in(\cM_B^S)'$, Eq.~\eqref{eq:time_invariant} implies 
\begin{align*}
    \cR_{E\to B}^{\tau,\Omega}\circ\Omega_{B\to E}&\left(X_B(t+\delta t)-X_B(t)\right)\\
   &=X_B(t+\delta t)-X_B(t)\,,\:\forall t,\delta t\in\mathbb{R}\,.
\end{align*}
Therefore derivatives
\begin{align*}
    \frac{d^nX_B(t)}{dt^n}&=-i\left[\frac{d^{n-1}X_B(t)}{dt^{n-1}}, \log\rho_B \right]\\
    &=(-i)^n[...[[X_B(t),\log\rho_B],\log\rho_B],...,\log\rho_B]\\
    &=(-i)^n {\rm ad}_{\rho_B}^n(X_B(t))
\end{align*}
are the fixed points of $\cR_{E\to B}^{\tau,\Omega}\circ\Omega_{B\to E}$ for all $n\in\mathbb{N}$. Since
$\Delta^t_{\rho_B}(ad^n_{\rho_B}(X_B))=ad^n_{\rho_B}(X_B(t))$, $ad^n_{\rho_B}(X_B)$ satisfies Eq.~\eqref{eq:time_invariant}, for all $n\in\mathbb{N}$. Therefore, ${\rm ad}_{\rho_B}(X_B)\in(\cM_B^S)'$ and $[(\cM_B^S)',\log\rho_B]\subset(\cM_B^S)'$. 

Suppose that there is a subalgebra of $\cN_B\subset\cF(\cR_{E\to B}^{\tau,\Omega}\circ\Omega_{B\to E})$ satisfying $[\cN_B,\log\rho_B]\subset\cN_B$. Then, for any $X_B\in\cN_B$, the iterated commutators ${\rm ad}_{\rho_B}^n(X_B)$ are also in $\cF(\cR_{E\to B}^{\tau,\Omega}\circ\Omega_{B\to E})$ for all $n\in\mathbb{N}$. 
This implies that
\begin{equation*}
    X_B(t)=\sum_{n=0}^\infty \frac{d^nX_B(t)}{dt^n}\vert_{t=0}t^n=\sum_{n=0}^\infty (-i)^n{\rm ad}_{\rho_B}^n(X_B)t^n
\end{equation*}
is also in $\cF(\cR_{E\to B}^{\tau,\Omega}\circ\Omega_{B\to E})$ for all $t\in\mathbb{R}$; thus, $X_B\in(\cM_B^S)'$ by Eq.~\eqref{eq:time_invariant}.
\end{proof}

If $(\cM^S_B)'$ is abelian, it is isomorphic to $\mathbb{C}^k$ for some $k\in\mathbb{N}$. This is a direct sum of one-dimensional systems $\mathbb{C}$, and hence one can write 
\begin{equation*}
    (\cM^S_B)'\cong\bigoplus_i I_{\cH_{B_i^L}}\ot \mathbb{C}\,,
\end{equation*}
which implies $d_{B_i^R}=1$ in Eq.~\eqref{eq:decHBminimal} for all $i$. Thus,
\begin{equation*}
d_B=\sum_id_{B_i^L}d_{B_i^R}=\sum_id_{B_i^L}=d_{B^L}\,,    
\end{equation*}
i.e., nontrivial exact compression is impossible. To achieve nontrivial compression, $(\cM_B^S)'$ must be non-abelian.  Lemma~\ref{lem:Minimals}(i) for $t=0$ implies that $(\cM_B^S)'$ is a subalgebra of $\cA_{\Omega_{B\to E}}$ Therefore, we have the following criteria, which we set as an independent theorem due to its usefulness.
\begin{thm}\label{thmET}
It holds that 
\begin{equation}\label{eq:cAthmFT}
\cA_{\Omega_{B\to E}}=\cF\left(\cR_{E\to B}^{\tau,\Omega}\circ\Omega_{B\to E}\right)=({\rm Im}\Omega_{A\to B}^\dagger)'
\end{equation}
and a necessary condition for $d_\tB^{\min}<d_B$ is that the algebra~\eqref{eq:cAthmFT} is non-abelian.
\end{thm}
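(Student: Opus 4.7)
The plan is to prove the theorem by directly specializing the two correctable-algebra identities recorded earlier as Lemmas~\ref{lem:ac=com} and \ref{lem:OAQECFix}, and then to deduce the non-abelian criterion from the subalgebra containment in Lemma~\ref{lem:Minimals}(i) combined with the analysis of the abelian case carried out in the paragraph just preceding the theorem. Both parts follow almost mechanically from results already proved; the main task is notational bookkeeping, making sure that the conventions for Hilbert-Schmidt dual and complementary channel line up across the various definitions used in the paper.

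For the double equality in \eqref{eq:cAthmFT}, I would first apply Lemma~\ref{lem:OAQECFix} to $\cE=\Omega_{B\to E}$. The identity $\cA_\cE=\cF(\cR^{\tau,\cE}\circ\cE)$, combined with the explicit form of $\cR_{E\to B}^{\tau,\Omega}$ in \eqref{eq:PetzmapQEC} as the Petz recovery of $\Omega_{B\to E}$ with respect to $\tau_B$, yields the first equality $\cA_{\Omega_{B\to E}}=\cF(\cR_{E\to B}^{\tau,\Omega}\circ\Omega_{B\to E})$. For the second equality I would apply Lemma~\ref{lem:ac=com} to $\cE=\Omega_{B\to A}$. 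By construction its complementary channel is $\Omega_{B\to E}$, while its Hilbert-Schmidt dual is the unital CP map $\Omega_{A\to B}^\dagger$ of \eqref{def:Omega1}, so substituting into $\cA_{\cE^c}=({\rm Im}\,\cE^\dagger)'$ yields $\cA_{\Omega_{B\to E}}=({\rm Im}\,\Omega_{A\to B}^\dagger)'$, completing the chain.

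For the non-abelian criterion I would argue the contrapositive. Assume the algebra in \eqref{eq:cAthmFT} is abelian. Lemma~\ref{lem:Minimals}(i), specialized at $t=0$, shows that $(\cM_B^S)'$ is a subalgebra of $\cA_{\Omega_{B\to E}}$, so $(\cM_B^S)'$ is also abelian; the argument in the paragraph immediately before the theorem then forces $(\cM_B^S)'\cong\mathbb{C}^k$, and hence $d_{B_i^R}=1$ for every $i$ in the decomposition \eqref{eq:decHBminimal}. Consequently $d_B=\sum_i d_{B_i^L}d_{B_i^R}=\sum_i d_{B_i^L}=d_\tB^{\min}$, i.e., no nontrivial compression exists, which is the desired contrapositive. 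I do not anticipate any genuine obstacle beyond this bookkeeping, since each step is a direct invocation of an already established result.
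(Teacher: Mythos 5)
Your proposal is correct and follows essentially the same route as the paper: the double equality is obtained by specializing Lemma~\ref{lem:OAQECFix} to $\Omega_{B\to E}$ and Lemma~\ref{lem:ac=com} to $\Omega_{B\to A}$ (equivalently, the $t=0$ case of steps (i)--(ii) in the proof of Lemma~\ref{lem:Minimals}), and the non-abelian criterion is exactly the contrapositive argument from the paragraph preceding the theorem combined with the containment $(\cM_B^S)'\subset\cA_{\Omega_{B\to E}}$ from Lemma~\ref{lem:Minimals}(i) at $t=0$. No gaps.
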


\section{Calculation of the minimal dimension}\label{sec:calculate}
We provide a method for calculating the minimal dimensions, as shown in Theorem~\ref{thm:main}. 
Recall that we define $\cT_{B\to B}:=\cR_{E\to B}^{\tau,\Omega}\circ\Omega_{B\to E}$ which is a self-dual CPTP-map.

\subsection{Proof of Eq.~\eqref{def:E_T}}\label{sec:ETcal}
Let $\{K_i^\dagger\}$ be the Kraus operators of $\Omega_{A\to B}^\dagger(\cdot)=\sum_i K_i^\dagger\cdot K_i$ satisfying the orthogonality condition $\tr(K_i^\dagger K_j)=\omega_i\delta_{ij}$ (which always exists), where $\omega_i$ is a non-zero eigenvalue of $J_{AB}$. 
The Kraus operators of $\cT_{B\to B}$ are calculated in more detail. By definition, it holds that:
\begin{equation}\label{eq:Tform}
    \cT_{B\to B}=\cR_{E\to B}^{\tau,\Omega}\circ\Omega_{B\to E}=\Omega^\dagger_{E\to B}\circ\Gamma^{-1}_{O_E}\circ\Omega_{B\to E}\,,
\end{equation}
where $O_E:=\Omega_{B\to E}(I_B)$ is a positive operator (we set $d_E=\rank(\rho_{AB})$). For  fixed bases of $A$ and $E$ denoted by $\{|a\rag_A\}$ and $\{|\phi_i\rag_E\}$, respectively, we define
\begin{equation}\label{eq:OmBEKraus}
    F_a:=\sum_{i=1}^{\rank(\rho_{AB})}|\phi_i\rag\lag a|K_i\,, 
\end{equation}
which provides a Kraus representation of $\Omega_{B\to E}=\Omega_{B\to A}^c$. Owing to the orthogonality condition, we obtain
\begin{align}
    O_E&:=\Omega_{B\to E}(I_B)\nonumber\\
    &=\sum_aF_aF_a^\dagger\nonumber\\
    &=\sum_{i,j}\tr(K_j^\dagger K_i)|\phi_i\rag\lag\phi_j|\nonumber\\
    &=\sum_{i}\omega_i|\phi_i\rag\lag\phi_i|,\label{eq:omegaspec}
\end{align}
such that $O_E$ is diagonal. Combining Eqs.~\eqref{eq:Tform}~\eqref{eq:omegaspec}, the Kraus operators of $\cT$ can be written as
\begin{align}
    T_{(a,b)}&=F_a^\dagger O_E^{-\half}F_b\nonumber\\
    &=\sum_{i,j}K_i^\dagger|a\rag\lag \phi_i|O_E^{-\half}|\phi_j\rag\lag b|K_j\nonumber\\
    &={\tilde \Omega}^\dagger_{A\to B}(|a\rag\lag b|)\,,\label{eq:KrausT}
\end{align}
where ${\tilde \Omega}$ is defined by Eq.~\eqref{def:tOmega}.  

The superoperator of $\cT$
\begin{equation}\label{eq:superET2}
   E_\cT:=\sum_{(a,b)}T_{(a,b)}\otimes {\bar T}_{(a,b)}\in\cB\left(\cH_B\otimes\cH_{B_1}\right)\,,
\end{equation}
can now be written as:
\begin{equation}\label{eq:superET1}
    E_\cT=\sum_{a,b}{\tilde \Omega}_{A\to B}(|a\rag\lag b|)\otimes\overline{ {\tilde \Omega}_{A\to B_1}(|a\rag\lag b|)}\,.
\end{equation}

 Note that Eq.~\eqref{eq:KrausT} and Theorem~\ref{thmET} imply that:
 \begin{equation}
     ({\rm Im}{\tilde \Omega}_{A\to B}^\dagger)'=\cF(\cT)=({\rm Im}{\Omega}_{A\to B}^\dagger)' \,.
 \end{equation}

\subsection{Vectorization of $(\cM_B^S)'$}
Introducing vectorization of the operators. 
\begin{equation}
    X_B\mapsto |X_B\rrag:=(X_B\ot I_{B_1})|\cI\rrag_{B{B_1}}\,,\quad \cH_{B_1}\cong \cH_B\,.
\end{equation}
The action of $\cT$ on $X_B$ corresponding to that of $E_\cT$ on $|X_B\rrag$, 
and the action of ${\rm ad}_{\rho_B}$ is also rephrased by the superoperator 
\begin{equation}
    RL_{B}=I_B\otimes\log\rho_{B_1}^{T_{B}}-\log\rho_B\otimes I_{B_1}\,,
\end{equation}
where the transposition is taken in the same basis as the conjugation in Eq.~\eqref{eq:superET1}. 

Then, we have a vector representation of the elements in $(\cM_B^S)'$ as follows:
\begin{lem}
Consider the following spectral decomposition: 
\begin{align}
   E_\cT&=\bigoplus_{\lambda}\lambda P_\lambda\,,\\
   RL_{B}&=\bigoplus_\eta\eta Q_\eta\,.
\end{align}
Then it holds that
\begin{equation}
    V:= \bigoplus_{\eta\in{\rm spec}(RL_B)}(Q_\eta\cap P_1)\,
\end{equation}
is the vectorization of $(\cM_B^S)'$. 
\end{lem}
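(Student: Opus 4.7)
The plan is to lift the characterization in Lemma~\ref{lem:Minimals}(iv) to the vectorized picture and then read off $V$ from the spectral decompositions of $E_\cT$ and $RL_B$.

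First, I would verify two straightforward correspondences. Under vectorization $X_B \mapsto |X_B\rrag = (X_B\otimes I)|\cI\rrag_{BB_1}$, a direct computation from the defining expressions \eqref{def:E_T} and \eqref{def:LR} yields $E_\cT|X_B\rrag = |\cT(X_B)\rrag$ and $RL_B|X_B\rrag = |{\rm ad}_{\rho_B}(X_B)\rrag$. Consequently, $X_B \in \cF(\cT)$ precisely when $|X_B\rrag \in \supp(P_1)$, and a subspace of operators is ${\rm ad}_{\rho_B}$-invariant if and only if its vectorization is $RL_B$-invariant.

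Second, I would identify $V$ as the largest $RL_B$-invariant subspace contained in $\supp(P_1)$. Because $\log\rho_B$ is Hermitian and $\log\rho_B\otimes I$ commutes with $I\otimes\log\rho_{B_1}^{T_{B_1}}$, $RL_B$ is itself Hermitian. For any $RL_B$-invariant subspace $W$, expanding each vector into $RL_B$-eigencomponents and inverting Vandermonde via polynomials in $RL_B$ gives $W = \bigoplus_\eta (\supp(Q_\eta) \cap W)$. Imposing $W \subset \supp(P_1)$ and maximizing produces exactly $V = \bigoplus_\eta (\supp(Q_\eta) \cap \supp(P_1))$.

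Translating back, $V$ is the vectorization of the largest subspace $\tilde W \subset \cF(\cT)$ satisfying ${\rm ad}_{\rho_B}(\tilde W) \subset \tilde W$. The remaining, and in my view main, obstacle is to bridge \emph{largest invariant subspace} to \emph{largest invariant subalgebra}, since Lemma~\ref{lem:Minimals}(iv) supplies the latter. For this I would use that ${\rm ad}_{\rho_B}$ is a derivation: by the Leibniz rule ${\rm ad}_{\rho_B}^n(XY) = \sum_k \binom{n}{k}\,{\rm ad}_{\rho_B}^k(X)\,{\rm ad}_{\rho_B}^{n-k}(Y)$ for $X,Y \in \tilde W$, and since $\tilde W \subset \cF(\cT)$ with $\cF(\cT)$ a $C^*$-algebra, each iterated commutator lies in $\cF(\cT)$; hence $XY \in \tilde W$. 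Closure under the $*$-operation follows from ${\rm ad}_{\rho_B}(X^*) = -({\rm ad}_{\rho_B}(X))^*$ together with the self-adjointness of $\cF(\cT)$. Thus $\tilde W$ is a $*$-subalgebra of $\cF(\cT)$ invariant under ${\rm ad}_{\rho_B}$, and by the maximality clause of Lemma~\ref{lem:Minimals}(iv) coincides with $(\cM_B^S)'$. Therefore $V$ is the vectorization of $(\cM_B^S)'$, as claimed.
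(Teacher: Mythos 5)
Your proposal is correct, and its first two steps (the vectorization correspondences $E_\cT|X_B\rrag=|\cT(X_B)\rrag$, $RL_B|X_B\rrag=|{\rm ad}_{\rho_B}(X_B)\rrag$, and the identification of $V$ as the largest $RL_B$-invariant subspace of $\supp(P_1)$) coincide with the paper's argument. Where you diverge is in bridging ``largest invariant subspace'' to $(\cM_B^S)'$. The paper does not prove that the devectorized space $\cV$ is a subalgebra at all: it shows the inclusion $\cV\subset(\cM_B^S)'$ by re-running the analyticity argument from the proof of Lemma~\ref{lem:Minimals} --- from ${\rm ad}_{\rho_B}^n(X_B)\in\cV\subset\cF(\cT)$ for all $n$ it Taylor-expands $X_B(t)=\Delta_{\rho_B}^t(X_B)$ to conclude $X_B(t)\in\cF(\cT)$ for all $t$, and then invokes the characterization (iii) (equivalently, the time-invariance condition), which applies to individual elements and never needs multiplicative closure. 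You instead prove directly that the largest ${\rm ad}_{\rho_B}$-invariant subspace of $\cF(\cT)$ is automatically a $*$-subalgebra, via the Leibniz rule for iterated commutators together with the fact that $\cF(\cT)$ is a $C^*$-algebra, and then apply the maximality clause of Lemma~\ref{lem:Minimals}(iv) verbatim (the reverse inclusion $(\cM_B^S)'\subset\tilde W$ being immediate since $(\cM_B^S)'$ is itself an invariant subspace). Both routes are sound; yours buys a cleaner logical interface --- statement (iv) is quantified over subalgebras, and you supply a subalgebra --- at the cost of the extra Leibniz computation, while the paper's route is shorter but silently relies on the fact that its proof of (iii)$\Leftrightarrow$(iv) only ever used subspace invariance, not the algebra structure.
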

\begin{proof}
By definition, $V$ is spanned by the eigenvectors of $RL_{B}$, stabilized by $P_1$. 
Moreover, this is the largest subspace of $\supp(P_1)$ such that
\begin{align}
    E_\cT|X_B\rrag&=|X_B\rrag\,,\label{eq:prooflem3-1}\\
    RL_B|X_B\rrag&\in V\label{eq:prooflem3-2}
\end{align}
for all $|X_B\rrag\in V$.  This is because Eq.~\eqref{eq:prooflem3-2} implies $RL_B=P_VRL_BP_V\oplus (I-P_V)RL_B(I-P_V)$, which further implies that $V$ is spanned by the eigenvectors of $RL_B$ (recall that $P_V$ is the projection onto $V$). Eq.~\eqref{eq:prooflem3-1} implies that these eigenvectors must be in $\supp(P_1)$. 

Lemma~\ref{lem:Minimals}(iv) clarifies that the vectorization of $(\cM_B^S)'$ must be a subspace of $V$. By inverting the vectorization of $V$, we obtain an operator subspace $\cV$ that is fixed by $\cT_{B\to B}$ and is invariant under ${\rm ad}_{\rho_B}$. Therefore, for any $X_B\in\cV$, $\cT_{B\to B}(X_B)=X_B$ and ${\rm ad}_{\rho_B}^n(X_B)\in \cV$ $\forall n\in\mathbb{N}$. The latter condition implies $X_B(t)\in \cV\subset\cF(\cT_{B\to B})$ $\forall t\in\mathbb{R}$. Therefore, Lemma~\ref{lem:Minimals}(iii) yields $\cV\subset(\cM_B^S)'$.
\end{proof}
Because we know that $V$ is a vectorization of $(\cM_B^S)'$, we also obtain the superoperator of the conditional expectation of $(\cM_B^S)'$. 
\begin{cor}
The projection operator, $P_V$ on $V$, is the superoperator representation of the conditional expectation (with respect to the completely mixed state) on $(\cM_B^S)'$.   
\end{cor}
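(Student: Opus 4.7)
The plan is to reduce the corollary to the standard operator-algebraic fact that the Hilbert-Schmidt orthogonal projection onto a unital $C^*$-subalgebra coincides with the (unique) trace-preserving conditional expectation onto that subalgebra. By the preceding lemma, $V \subset \cH_B \otimes \cH_{B_1}$ is precisely the vectorization $\{|X\rrag : X \in (\cM_B^S)'\}$ of the commutant. The vectorization map $X \mapsto |X\rrag$ intertwines the canonical inner product on $\cH_B \otimes \cH_{B_1}$ with the Hilbert-Schmidt inner product $\tr(X^\dagger Y)$ on $\cB(\cH_B)$. Hence the ambient orthogonal projection $P_V$ descends, under de-vectorization, to the Hilbert-Schmidt orthogonal projection $\Pi_B$ from $\cB(\cH_B)$ onto $(\cM_B^S)'$, and the corollary reduces to showing $\Pi_B = \mathbb{E}_B$.

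For this identification I would invoke the defining property of the conditional expectation with respect to $\tau_B$ onto a unital $C^*$-subalgebra $\cN$, namely that $\mathbb{E}_B(X) \in \cN$ together with
\begin{equation*}
\tr\bigl(Y \, \mathbb{E}_B(X)\bigr) = \tr(Y X) \quad \forall X \in \cB(\cH_B),\; Y \in \cN.
\end{equation*}
This is precisely the characterizing property of the Hilbert-Schmidt orthogonal projection onto $\cN$, so $\Pi_B$ and $\mathbb{E}_B$ agree as linear maps. Complete positivity of this common map — which confirms it is a genuine CP conditional expectation and not merely a linear projection — can be verified, for instance, by writing it as a Haar average $X \mapsto \int u X u^\dagger \, d\nu(u)$ over unitaries in $\cM_B^S = ((\cM_B^S)')'$, or by appealing directly to Takesaki's theorem in the tracial setting.

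The only genuine obstacle is bookkeeping: one must check that the convention for $|\cI\rrag_{BB_1}$ used to define $E_\cT$ in Eq.~\eqref{def:E_T} and $RL_B$ in Eq.~\eqref{def:LR} is consistent with the vectorization $X \mapsto (X \otimes I)|\cI\rrag$, so that the inner-product identification above is exact and the transposition/complex-conjugation conventions line up. Provided these are aligned, no further work is required: the corollary is an immediate consequence of the preceding lemma combined with the standard characterization of the tracial conditional expectation as a Hilbert-Schmidt orthogonal projection.
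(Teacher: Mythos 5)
Your proof is correct and follows essentially the same route as the paper, which treats the corollary as an immediate consequence of the preceding lemma: once $V$ is identified as the vectorization of $(\cM_B^S)'$, the orthogonal projection $P_V$ is the superoperator representation of the Hilbert--Schmidt orthogonal projection onto that subalgebra, i.e., of the tracial conditional expectation $\mathbb{E}_B$. The standard facts you invoke (the isometry between vectorization and the Hilbert--Schmidt inner product, and the identification of the conditional expectation with respect to $\tau_B$ with the orthogonal projection onto the unital $C^*$-subalgebra) are precisely the implicit content of the paper's one-line justification.
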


\subsection{Proof of Theorem~\ref{thm:main}}
We denote the conditional expectation of $(\cM_B^S)'$ as $\mathbb{E}_B$. 
Recall that there exists a decomposition $\cH_B\cong\bigoplus_i\cHil\otimes\cHir$ such that
\begin{align*}
(\cM_B^S)'\cong &\bigoplus_i I_{B_i^L}\otimes \cB(\cHir)\,.
\end{align*}
The conditional expectation $\mathbb{E}_{B}$ (with respect to $\tau_B$) is then written as
\begin{align*}
\mathbb{E}_{B}(\cdot)=&\bigoplus_i \tau_{B_i^L}\otimes \tr_{\cHil}\left(\Pi_i\cdot\Pi_i\right)\,,
\end{align*}
where $\Pi_i$ is a projection onto $\cHil\otimes\cHir$. 
Up to a local unitary, $|\cI\rrag_{BB_1}$ also exhibits decomposition
\begin{equation}\label{eq:IBB}
    |\cI\rrag_{BB_1}=\sum_i|\cI_i\rrag_{B_i^LB_{1i}^L}\otimes |\cI_i\rrag_{B_i^R{B_{1i}^R}}\,\,.
\end{equation}
Note that the argument in the following is independent of the basis choice in $|\cI\rrag_{BB_1}$. 
The normalized Choi state has the following corresponding decomposition:
\begin{align}
  C_{BB_1}&:=\frac{1}{d_B}(\id_{B}\otimes\mathbb{E}_{B_1})\left(|\cI\rrag\llag\cI|_{BB_1}\right)\\
  &=\frac{1}{d_B}\bigoplus_i  \tau_{B_i^L}\otimes |\cI_i\rrag\llag\cI_i|_{B_i^R{B_{1i}^R}}\otimes I_{B_{1i}^L}\,\\
  &=\bigoplus_i p_i \tau_{B_i^L}\otimes |\Psi_i\rrag\llag\Psi_i|_{B_i^R{B_{1i}^R}}\otimes \tau_{B_{1i}^L}\,,
\end{align}
where $p_i:=d_{B_i^L}d_{B^R_i}/d_B$, the square root of which is: 
\begin{align}
  \sqrt{C_{BB_1}}=\bigoplus_i \sqrt{p_i} \sqrt{\tau}_{B_i^L}\otimes |\Psi_i\rrag\llag\Psi_i|_{B_i^R{B_{1i}^R}}\otimes \sqrt{\tau}_{B_{1i}^L}\,.
\end{align}

Next, we consider the canonical purification of $C_{BB_1}$.
\begin{align*}
  |C&\rag_{BB_1{\bar B}{\bar B_1}}\\
  &:=\sqrt{C_{BB_1}}|\cI\rrag_{B{\bar B}}|\cI\rrag_{B_1{\bar B_1}}\\
  &=\sum_i  \sqrt{p_i} |\Psi_i\rrag_{B_i^L{\bar B}_{i}^L}|\Psi_i\rrag_{B_i^R{B_{1i}^R}}|\Psi_i\rrag_{{\bar B}_{i}^R{{\bar B}_{1i}^R}}|\Psi_i\rrag_{B_{1i}^L{\bar B}_{1i}^L}\,,
\end{align*}
where $\cH_{\bar B}\cong\cH_{{\bar B}_1}\cong \cH_B$. 
Applying a unitary $U_{{\bar B}{\bar B}_1}$ gives: 
\begin{align*}
  U_{{\bar B}{\bar B}_1}|C&\rag_{BB_1{\bar B}{\bar B_1}}=\sum_i \sqrt{p_i} |\Psi_i\rrag_{B_i^R{B_{1i}^R}}|\Phi_i\rrag_{B_i^L{\bar B}{\bar B}_1B_{1i}^L}\,
\end{align*}
where $|\Phi_i\rrag_{B_i^L{\bar B}{\bar B}_1B_{1i}^L}$ denotes some pure state. 
The reduced state of $B{\bar B}$ is then calculated as
\begin{equation}
    \bigoplus_i p_i \tau_{B_i^R}\otimes \Phi_{i, B^L_i{\bar B}}\,
\end{equation}
and the entanglement entropy is calculated as
\begin{equation}
    S(B{\bar B})_{UCU^\dagger}=H(\{p_i\})+\sum_ip_i\left(\log d_{B^R_i}+S(B^L_i{\bar B})_{\Phi_i}\right)\,,
\end{equation}
where $H(\cdot)$ denotes the Shannon entropy. Only the final term depends on $U_{{\bar B}{\bar B}_1}$. 
The minimum in Eq.~\eqref{eq:minU} is achieved when every $\Phi_{i, B^L_i{\bar B}}$ is in a pure state (the minimum value is the entanglement of the purification~\cite{ent_puri} of $C_{BB_1}$). 
The corresponding state is given as 
\begin{align}
  &|{\tilde C}\rag_{BB_1{\bar B}{\bar B_1}}=\sum_i \sqrt{p_i}  |\Psi_i\rrag_{B_i^R{B_{1i}^R}}
  |\Psi_i\rrag_{B_i^L{\bar B}}|\Psi_i\rrag_{B_{1i}^L{\bar B}_1} \,,
\end{align}    
where $|\Psi_i\rrag_{B_i^L{\bar B}}$ (and $|\Psi_i\rrag_{B_{1i}^L{\bar B}_1}$) are pure maximally entangled states between $B_i^L$ and ${\bar B}$ ($B_i^R$ and ${\bar B}_1$)(Fig. ~\ref{fig:Cstates}). 

It is clear that 
\begin{align}
    \Sch(BB_1{\bar B}_1:{\bar B})_{\tilde C}&=\rank({\tilde C}_{\bar B})=\sum_i{d_{B^L_i}}\\
    \Sch(B{\bar B}:B_1{\bar B}_1)_{\tilde C}&=\rank({\tilde C}_{{\bar B}{\bar B}_1})=\sum_i{d_{B^R_i}}
\end{align}
By comparing $|{\tilde C}\rag_{BB_1{\bar B}{\bar B_1}}$ and decomposition~\eqref{eq:minimaldecoH}, we conclude $\cH_{\tB}=\bigoplus_i\cH_{B_i^L}$ from which the theorem holds. 

Note that it might be true that the minimal achievable dimension can be calculated by minimization:
\begin{equation}\label{eq:onlyCbb1}
\min_{U_{BB_1}}\rank\left(\tr_{B_1}\left(U_{BB_1}C_{BB_1}U_{BB_1}^\dagger\right)\right)\,.
\end{equation}
This expression does not need a purification, however, we do not know an easily implementable algorithm to perform the minimization in Eq.~\eqref{eq:onlyCbb1}.  

\begin{figure}[htbp]
	\centering
	\includegraphics[width=0.45\textwidth]{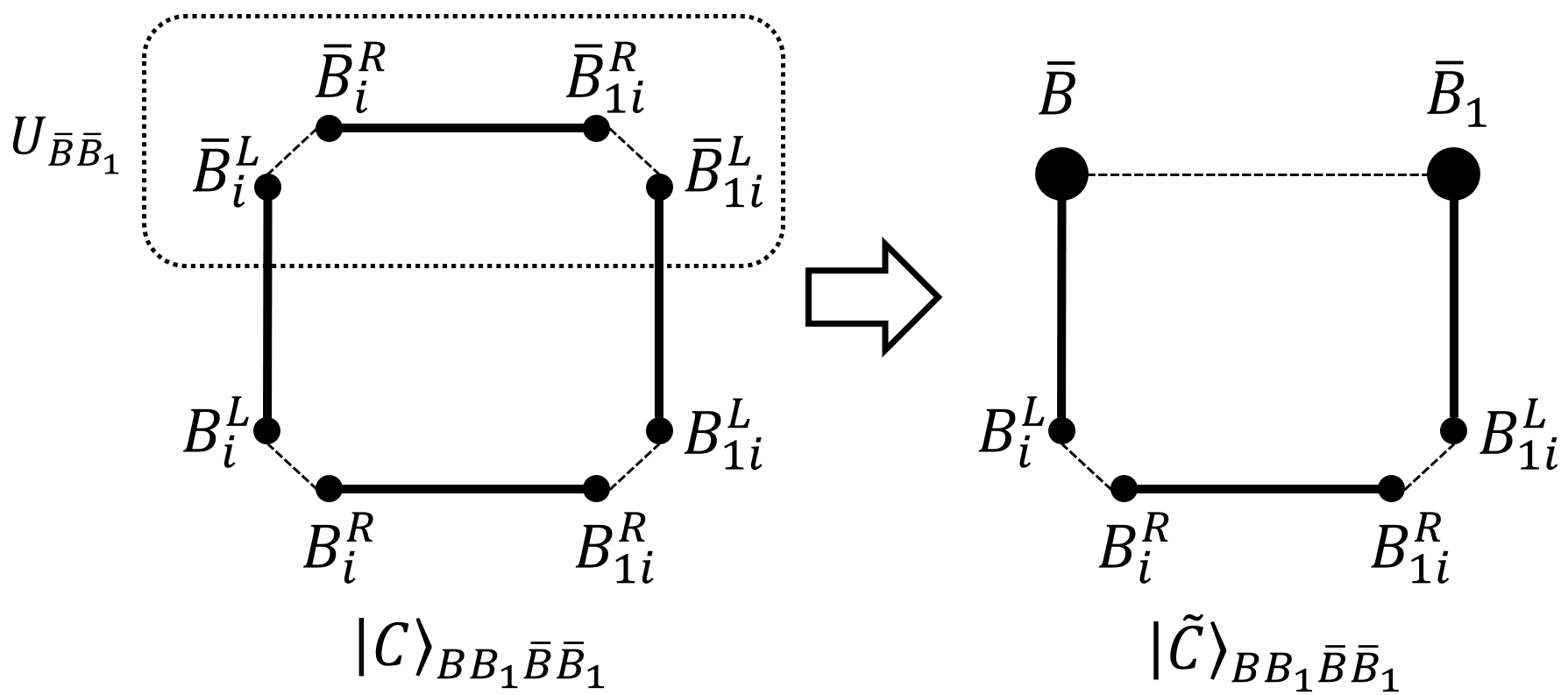}
	\caption{(Left) Canonical purification of $C_{BB_1}$, which is the sum of tensor products of four maximally entangled states. By minimizing the entanglement between $B{\bar B}$ and $B_1{\bar B}_1$ by the unitary $U_{{\bar B}{\bar B}_1}$, one obtain the desired state $|{\tilde C}\rag$ whose rank on ${\bar B}$ gives the dimension $d_\tB^{\min}=d_{B^L}$. ${\tilde C}_{{\bar B}{\bar B}_1}$ is separable and contains only classical correlations (dashed line).}
	\label{fig:Cstates}
\end{figure}

\section{Application examples}\label{sec:example}
\subsection{Classical case}
The condition for the classical case is well known. However, here we reproduce it using our results to check for consistency. 
Let us consider the classical bipartite state
\begin{equation*}
    \rho_{AB}=\sum_{a,b}p(a,b)|a,b\rag\lag a,b|
\end{equation*}
such that $\rho_A,\rho_B>0$ without loss of generality. Then, the Choi matrix $J_{AB}$ is 
\begin{equation*}
    J_{AB}=\sum_{a,b}p(a|b)|a,b\rag\lag a,b|
\end{equation*}
and obtain $\omega_{a,b}=p(a|b)$ and $K_{a,b}=\sqrt{p(a|b)}|a\rag\lag b|$. The unital CP map in Eq.~\eqref{def:tOmega} can be written as
\begin{equation*}
 \Omega^\dagger_{A\to B}(\cdot)=\sum_{a,b}p(a|b)\lag a|\cdot |a\rag|b\rag\lag b|\,.
\end{equation*}
Then, we introduce an equivalence relation between basis labels $|b\rag$ as
\begin{equation*}
    b\sim b'\Leftrightarrow p(a|b)=p(a|b')\,,\quad \forall a\,.
\end{equation*}
This equivalence relation induces a disjoint decomposition of the label set $\cB=\{b\}$ as $\cB=\bigsqcup_{i=1}^m\cB^i$ and the Hilbert space as $\cH_B=\bigoplus_i\cH_{B^i}$, where $\cH_{B^i}:={\rm span}\{|b\rag\: |\:b\in\cB^i\}$. We define $I_{B^i}:=\sum_{b\in \cB^i}|b\rag\lag b|$. A simple calculation shows that
\begin{equation*}
    {\rm Im}{\Omega}^\dagger_{A\to B}=\left\{\bigoplus_{i=1}^m c_iI_{B^i} \, \middle\vert \: c_i\in\mathbb{C}\right\}\,,
\end{equation*}
and therefore, 
\begin{equation*}
   \cF(\cT)=\left({\rm Im}\Omega^\dagger_{A\to B}\right)'\cong\bigoplus_{i=1}^m|i\rag\lag i|\otimes M_{B^i}(\mathbb{C})\,
\end{equation*}
from Theorem~\ref{thmET}:
As $\rho_B$ is classical and does not have an off-diagonal term between $b\neq b'$, $\Delta_{\rho_B}^t(\cF(\cT))\subset\cF(\cT)$ is true for any $t\in\mathbb{R}$. Hence, $(\cM_B^S)'=\cF(\cT)$ and the minimal sufficient subalgebra of $\cH_B$ become
\begin{equation*}
   \cM_B^S\cong\bigoplus_{i=1}^m|i\rag\lag i|\otimes I_{B^i}\,
\end{equation*}
that the verification of
\begin{equation*}
   C_{BB_1}=\frac{1}{d_B}\bigoplus_{i=1}^m|i\rag\lag i|_{B_i^L}\otimes |\cI\rrag\llag \cI|_{B_i^RB_{1i}^R}\otimes|i\rag\lag i|_{B_{1i}^L}\,.
\end{equation*}
and 
\begin{equation}
    d_\tB^{\min}=m=\rank(C_{BB_1})\,.
\end{equation}
is simple ($d_{B_i^L}=1$ in this case).
This is consistent with minimal sufficient statistics\cite{Coverthomas06}.

\subsection{Pure bipartite states}
For a pure state, $\rho_{AB}=|\psi\rag\lag\psi|_{AB}$ restricted to $\rho_A,\rho_B>0$ implies $d_A=d_B=\Sch(A)_\psi$. 
\begin{equation}
    J_{AB}=|\cI\rrag\llag \cI|_{AB}
\end{equation}
and thus $\Omega_{A\to B}^\dagger=\id_{A\to B}$. This implies that
\begin{equation}
    \cF(\cT)=\mathbb{C}I_B
\end{equation}
according to Theorem~\ref{thmET}. This is also evident that $\Omega_{B\to E}$ must be a completely depolarized channel. Therefore, $(\cM_B^S)'=\mathbb{C}I$ and $d_\tB^{\min}=\Sch(A:B)_\psi=d_B$; that is, further compression is impossible. 

\subsection{Exact compression of quantum channel}
For a given CPTP-map $\cE_{A\to B}$, $\cF_{B\to\tB}$ is an exact compression if there is a post-processing CPTP-map $\cR_{\tB\to B}$ that satisfies 
\begin{align*}
    {\tilde \cE}_{A\to \tB}&:=\cF_{B\to \tB}\circ\cE_{A\to B}\,,\\
    \cE_{A\to B}&=\cR_{\tB\to B}\circ{\tilde \cE}_{A\to \tB}\,.
\end{align*}
Again, the exact compression is non-trivial if $d_\tB<d_B$. 
Consider the (normalized) Choi state of $\cE_{A\to B}$
\begin{equation}
    \rho_{AB}=(\id_A\otimes\cE_{{\bar A}\to B})(|\Psi\rrag\llag\Psi|_{A{\bar A}})\,.
\end{equation}
Then, the problem is reduced to the exact compression of $\rho_{AB}$ on $B$.

For these Choi states, $J_{AB}$ is further reduced to
\begin{align*}
    J_{AB}&=\cE(\tau_A)^{-1/2}(\id_A\otimes\cE)\left(|\Psi\rrag\llag\Psi|_{A{\bar A}} \right)\cE(\tau_A)^{-1/2}\\
    &=\cE(I_A)^{-1/2}(\id_A\otimes\cE)\left(|\cI\rrag\llag\cI|_{A{\bar A}} \right)\cE(I_A)^{-1/2}\,,
\end{align*}
which is the Choi operator of $(\cR^{\tau,\cE}_{B\to A})^\dagger$; that means, $\Omega_{A\to B}^\dagger=(\cR^{\tau,\cE}_{B\to A})^\dagger$. 
Therefore, $(\cM_B^S)'$ is the largest subalgebra of $({\rm Im}(\cR^{\tau,\cE}_{B\to A})^\dagger)'$ invariant under ${\rm ad}_{\cE_{A\to B}(\tau_A)}$.

\subsubsection{Unital channels}
Consider the case in which $\cE_{A\to B}$ is unital, that is, $\cE(I_A)=I_B$ ($A\cong B$). Then $\Delta_{\rho_B}^t=\id_B$ and $\Omega_{B\to A}=\cR^{\tau,\cE}_{B\to A}=\cE^\dagger_{B\to A}$ are  true. Because the modular invariance is trivial, $(\cM_B^S)'=\cF(\cT)$ and
\begin{equation}\label{eq:unitalMS}
    (\cM_B^S)'=({\rm Im}\cE_{A\to B})'\,,\quad \cM_B^S=\alg\left({\rm Im}\cE_{A\to B}\right)\,.
\end{equation}
from Theorem~\ref{thmET}.

An example of a unital channel is a $G$-twirling operation of a finite group $G$. 
Let $U(g)$ be a unitary representation of $G$ in $\cH_B$. $G$-twirling $\cG(\cdot)$ is defined as follows: 
\begin{equation}
    \cG(\cdot):=\frac{1}{|G|}\sum_{g\in G}U(g)\cdot U(g^{-1})\,.
\end{equation}
In general, $U(g)$ is not irreducible and the Hilbert space $\cH_B$ can be decomposed as 
\begin{equation}
    \cH_B=\bigoplus_{\mu:irrep} \cH_{B^L_\mu}\otimes \cH_{B^R_\mu}\,,
\end{equation}
where the irreducible decomposition of $U(g)$ is given by 
\begin{equation}
    U(g)=\bigoplus_{\mu:irrep} U_\mu(g)\otimes I_{B^R_\mu},
\end{equation}
correspondingly. $U_\mu(g)$ acts on $\cH_{B^L_\mu}$ and $\dim\cH_{B^R_\mu}$ is the multiplicity of $\mu$. Using Schur's lemma and Eq.~\eqref{eq:unitalMS}, we obtain 
\begin{equation}
    d_\tB^{\min}=\sum_{\mu}d_{B_\mu^R}\,
\end{equation}
as desired. 

\section{Conclusion}\label{sec:conclusion}
We studied the task of exact local compression of quantum bipartite states, where the goal is to locally reduce the dimensions of each Hilbert space without losing any information. We demonstrated a closed formula that allows us to calculate the minimum achievable dimensions for this compression task. 

Furthermore, we derived simple upper and lower bounds for the rank of the matrix, which offers a numerically tractable approach to estimate the minimal dimensions. These bounds are in general not tight; however, it is still useful to check, for example, the scaling of the optimal compressed dimension.  Future studies could focus on exploring better (analytical and numerical) expressions for the optimal dimension. It would also be interesting to investigate the relationship between the obtained dimension and other known one-shot entropic quantities. 

The ability to compress quantum states and channels while maintaining their informational content could have more implications for zero-error quantum information processing. Theorem~\ref{thmET} suggests that non-trivial compression is impossible for generic bipartite states or quantum channels. The asymptotic result~\cite{Khanian2019-bg} also relies on the non-trivial Koashi-Imoto decomposition, so considering multiple copies is not helpful. However, it would be possible that for structured states or channels such as tensor network states or covariant channels, exact local compression can be generally non-trivial. The exact local compression on many-body states are also related to renormalization group flow as discussed in Refs.~\cite{mixedphase1,mixedphase2}

Practically, it would also be interesting to investigate {\it approximate} one-shot scenario, which has a better potential for applications in quantum information processing (such as quantum information bottleneck problem~\cite{8849518}) but has not been investigated in this paper. The asymptotic result~\cite{Khanian2019-bg} implies that asymptotic compression with vanishing global error also requires non-trivial Koashi-Imoto decomposition, thus the vanishing error condition may still be too restrictive (see e.g., Ref.~\cite{PhysRevA.107.052421} for a possible modification). One possible way to obtain approximate compression is considering ``smoothed" variant of the minimal dimension obtained in this paper. We leave studying such modification as a future problem.

\section*{Acknowledgments}
The author would like to thank Anna Jen\v{c}ov\'a and Francesco Buscemi for their helpful comments and fruitful discussions. K.~K. acknowledges support from JSPS Grant-in-Aid for Early-Career Scientists, No. 22K13972; from MEXT-JSPS Grant-in-Aid for Transformative Research Areas (A) ``Extreme Universe,” No. 22H05254; from JSPS Grant-in-Aid for Challenging Research (Exploratory), No. 23782782; and from the MEXT-JSPS Grant-in-Aid for Transformative Research Areas (B) No. 24H00829.

\end{document}